\renewcommand\hl[1]{#1} 
\newtheorem{theorem}{Theorem}
\newcounter{def}
\let\oldnl\nl
\newcommand{\nonl}{\renewcommand{\nl}{\let\nl\oldnl}}
\newtheorem{definition}[def]{Definition}
\begin{document}

\title{Enabling Content-Centric Device-to-Device Communication in the Millimeter-Wave Band}

\author{Niloofar~Bahadori, Mahmoud Nabil, Brian Kelley, Abdollah Homaifar

\IEEEcompsocitemizethanks{\IEEEcompsocthanksitem N. Bahadori, M.N. Mahmoud and A. Homaifar are with the Department
of Electrical and Computer Engineering, North Carolina A\&T State University , Greensboro, NC \protect\\
E-mail: nbahador@aggies.ncat.edu,\{mnmahmoud, homaifar\}@ncat.edu
\IEEEcompsocthanksitem B. Kelley is with the Department
of Electrical Engineering, University of Texas at San Antonio, San Antonio, TX \protect\\
E-mail: brian.kelley@utsa.edu
\IEEEcompsocthanksitem The corresponding author is Dr. Abdollah Homaifar.
}
}

\markboth{IEEE TRANSACTIONS ON MOBILE COMPUTING}%
{Bahadori \MakeLowercase{\textit{et al.}}: Enabling Content-Centric Device-to-Device Network in the Millimeter-Wave Band}

\IEEEtitleabstractindextext{%
\begin{abstract}
The growth in wireless traffic and mobility of devices have congested the core network significantly. This bottleneck, along with spectrum scarcity, made the conventional cellular networks insufficient for the dissemination of large contents. The ability of content-centric networking (CCN) and device-to-device (D2D) communication in offloading the network and huge unlicensed spectrum at millimeter-wave (mmWave) band, make the integration of CCN with D2D communication in the mmWave band a viable solution to improve the network’s \hl{throughput}. In this paper, we propose a novel scheme that enables efficient initialization of CCN-based D2D networks in the mmWave band through addressing decentralized D2D peer association and antenna beamwidth selection.  The proposed scheme considers mmWave characteristics such as directional communication and blockage susceptibility. We propose a heuristic peer association algorithm to associate D2D users using context information, including link stability time and content availability. We model the beamwidth selection problem as a potential game and propose a synchronous log-linear learning algorithm to obtain the game’s optimal Nash equilibrium. The performance of the proposed scheme in terms of data throughput and transmission efficiency is evaluated through extensive simulations. Simulation results show that the proposed scheme improves network performance significantly and outperforms other methods in the literature.
\end{abstract}

\begin{IEEEkeywords}
Content-centric network, information-centric network, device-to-device, mmWave, beamwidth selection, peer association.
\end{IEEEkeywords}}
\maketitle
\IEEEdisplaynontitleabstractindextext

\IEEEpeerreviewmaketitle
\IEEEraisesectionheading{\section{Introduction}}

\IEEEPARstart{T}{he} continuous growth in the number of smart mobile devices and multimedia services lead to unprecedented growth of data traffic in wireless networks. The capacity of the wireless network cannot practically cope with the tremendous growth in mobile traffic due to the congested core network and spectrum scarcity in the microwave band \cite{agiwal2016next}.
Studies on cellular network traffic show that a significant portion of the mobile traffic is due to duplicate downloads of a few popular contents (e.g., popular videos) with large sizes \cite{liu2014content}. In light of this paradigm shift, the majority of the next-generation communications will be content-oriented. In other words, the flow of data through the network is driven by the content of the data, rather than by explicit addresses of the hosts of the data.

The shift from the connection-centric network to a more content-centric network (CCN)\cite{liu2014content} along with the advances in device-to-device (D2D) communications\cite{asadi2014survey} motivates caching the popular contents in the edge devices with relatively large storage sizes. 
Integrating CCN and D2D communications thus enabling mobile users to access the popular content over direct links from nearby users rather than the cellular network, is envisioned to improve the spectrum efficiency by offloading the cellular network \cite{liu2017information}. The decentralized architecture of the CCN-based D2D network makes the communication network more robust, flexible, and efficient. \hl{For example, in emergency or military scenarios where the communication infrastructure is unavailable, a CCN-based D2D network can be established to disseminate and share contents among network nodes.}
Despite its potential advantages, the large-scale implementation of the D2D communications in the CCN network has yet to be realized, mainly due to severe multi-user interference (MUI) and lack of enough bandwidth in the microwave band.
Exploiting the millimeter wave (mmWave) band for CCN-based D2D communications is seen as an attractive solution, where directional communication alleviates the problem of MUI, and abundant unlicensed spectrum addresses the spectrum scarcity issue \cite{rappaport2013millimeter}.

However, before reaping the potential advantages of CCN-based D2D communication in the mmWave band, one needs to address several new technical challenges. Initialization is crucial to establish reliable physical D2D links between communicating nodes. Initialization is a sensitive control layer procedure that requires careful planning as it may impose significant delay and overhead to the network, which in turn reduces the network's throughput \cite{giordani2016initial}.
Initialization in the directional CCN-based D2D network requires two main steps, namely, peer association and antenna beam management.
Peer association enables mobile users to discover a corresponding peer that is cached with their desired content, while beam management controls the beam alignment and the width of the antenna beam. Most of the existing mmWave D2D initialization schemes mainly focused on either peer association \cite{xiao2016energy, gu2015matching, zhao2016matching,namvar2015context,zhang2014social,lee2016new,wang2016propagation} or beam management problem \cite{ kutty2015beamforming,bahadori2018device,perfecto2017millimeter,perfecto2016beamwidth}. Even the work that considers both problems simultaneously is based on peer-to-peer protocol rather than CCN protocol \cite{zhang2018power}. Besides, most of the existing work oversimplify the model of either problem or both. In particular, existing context-aware methodologies proposed for peer association in microwave band such as \cite{zhao2016matching,namvar2015context,zhang2014social,lee2016new,wang2016propagation} fail to consider the impact of susceptibility to blockages, directional communication, and mobility of users on peer association efficiency.
Beam management in mmWave directional communication has also been discussed in the literature.
However, most of the existing schemes focus on beam alignment \cite{kutty2015beamforming, bahadori2018device}, while antenna beamwidth selection, despite its significant impact on the network's data throughput, has not been explored vastly in the literature. Moreover, existing beam management solutions rely heavily on a central controller \cite{perfecto2016beamwidth,perfecto2017millimeter,zhang2018power}.

In this paper, we propose a decentralized initialization scheme for enabling the CCN-based D2D network in the mmWave band through addressing peer association and antenna beamwidth selection. \hl{In the CCN-based D2D network, CCN protocol and D2D communication are used to offload the cellular network and high-bandwidth mmWave links are implemented to enhance the network's sum-rate capacity, particularly on the network fronthaul.}
We consider the limitations of mmWave band propagation such as blockage susceptibility, directional communication links and mobility of users. D2D users in our proposed model can be divided into two categories, namely, D2D transmitters (DTs) and D2D requesters (DRs). The proposed scheme enables users to perform peer association by utilizing context information, including data segment availability and link stability. The former parameter determines the amount of the desired data that is cached in the DTs, and the latter captures the time that the directional D2D link is stable for data transmission. Following peer association, D2D users are required to select the proper antenna beamwidth by considering the size of the requested data and the trade-off between data throughput and antenna beamwidth. Tools from game theory are used to model the beamwidth selection problem. The existence of the game's steady-state solution (i.e., Nash equilibrium) is established within a potential game framework. Further, a synchronous Log-linear learning (LLL) based algorithm is proposed to enable users to optimize their antenna beamwidth.
The main contributions of this work are summarized as follows:
\begin{itemize}
    \item A novel decentralized scheme is proposed to enable the initialization process in the CCN-based D2D network at mmWave frequencies, \hl{which considers the mmWave band propagation limitations.} The proposed scheme consists of two phases, namely, heuristic peer association algorithm and synchronous LLL-based beamwidth selection algorithm. Our extensive analysis shows that the proposed scheme improves the network’s throughput significantly.
    \item A heuristic peer association algorithm is proposed to enable DRs to discover their corresponding DT using context information, including data segment availability and stability time of the directional links.
    \hl{Unlike existing peer association methods \mbox{\cite{xiao2016energy, gu2015matching, zhao2016matching,namvar2015context,zhang2014social,lee2016new,wang2016propagation}}, the proposed algorithm considers blockage susceptibility, users' mobility and content availability simultaneously.} Moreover, compared to the existing methods, such as the deferred acceptance (DAA) algorithm \cite{perfecto2017millimeter,namvar2015context}, the proposed algorithm has a low-overhead with a low-computational load.
    \item The problem of D2D pair optimal beamwidth selection is modeled using game-theoretic approaches with a well-defined utility function. Moreover, we prove that the beamwidth selection game is an exact potential game to which the optimal Nash equilibrium is obtained. Further, a synchronous LLL-based algorithm is proposed to obtain the optimal Nash equilibrium of the game. The convergence of the LLL-based beamwidth selection algorithm has significantly accelerated, thanks to the short-range and directional mmWave communication. Compared to the particle swarm optimization (PSO) algorithm \cite{perfecto2017millimeter}, the proposed algorithm is decentralized and is guaranteed to converge to the optimal solution.
\end{itemize}
The remainder of this paper is organized as follows. Section \ref{sec:RelatedWorks}  reviews the relevant related work. The system model and assumptions are described in Section \ref{sec:systemModel}. The network data throughput maximization problem is formulated in Section \ref{sec:problemFormulation}. A novel decentralized scheme for enabling CCN-based D2D scheme in mmWave band through peer association and beamwidth selection is proposed in Section \ref{sec:ProposedScheme}. Simulation results are presented in Section \ref{sec:NumericalResults} and finally, conclusions are drawn in Section \ref{sec:conclusion}.

\section{Related Work}\label{sec:RelatedWorks}
CCN-based D2D communication has attracted a considerable amount of attention to bring the popular contents closer to the end-users, and thereby, to increase network capacity and to improve data throughput. Several schemes have addressed communication issues in CCN-based D2D networks, including content caching, link allocation and forwarding strategies \cite{liu2017information}.
However, despite the significant research on CCN-based D2D communication, very few in the literature have attempted to utilize the huge unlicensed mmWave bandwidth for transmitting large-size data files \cite{giatsoglou2017d2d,niu2015exploiting}. Among the challenges that face directional CCN-based D2D networks is establishing an efficient initialization procedure.
Initialization is crucial in enabling D2D users to establish physical links. Initialization comprises two phases: peer association and beam management, each of which has been explored in the literature separately.

Traditionally a D2D user is matched with a D2D peer in its vicinity, either randomly or based on distance-based algorithms with the goal to maximize the energy efficiency and reduce interference on cellular users \cite{xiao2016energy,  gu2015matching}. Such schemes are not efficient for CCN-based network band as the mobility of users and cached content in D2D devices are ignored. To address these challenges context-aware peer association algorithms are proposed \cite{namvar2015context,wang2016propagation,zhao2016matching}. Authors in \cite{namvar2015context} proposed a context-aware peer association algorithm with the goal of offloading the cellular network that exploits context information about the users' velocity and size of their demanded data to match D2D users using DAA algorithm. A peer propagation- and mobility-aware D2D association algorithm is suggested in \cite{wang2016propagation} based on joint consideration of social graphs, content propagation, and user mobility. \hl{However, most of these algorithms are centralized and none of them considered mmWave characteristics such as directional communication, mobility of users and blockage attenuation in matching D2D users.}

Directional transmissions are used in mmWave band to compensate for the high path-loss \cite{rappaport2013millimeter}.
Therefore, beam management must be implemented in order to establish high-throughput physical links. Beam management comprises two phases, beam alignment and antenna optimal beamwidth selection. Various approaches of beam alignment, which involves aligning the main-lobe of a pair of communicating node, have been suggested in the literature \cite{kutty2015beamforming,bahadori2018device}. Despite the recent advances in beamwidth tuning technologies \cite{wei2019noma} and significant impact of antenna beamwidth on the network performance \cite{perfecto2016beamwidth}, beamwidth selection has yet to be explored properly. The system throughput of a relaying small-cell network is noticeably improved using a coordinated heuristically optimized beamwidth selection \cite{perfecto2017millimeter,perfecto2016beamwidth}.
In addition, the problem of aggravated interference is addressed through proposing device association and beamwidth selection in \cite{zhang2018power}. All of these works have used the particle swarm optimization (PSO) algorithm, which requires a central controller (base station) to optimizes the antenna beamwidth.

Existing research work focus is mainly on either peer association problem or beamwidth selection using centralized approaches, and most are suitable for stationary scenarios. Even the work that considers both issues simultaneously \cite{zhang2018power} is based on peer-to-peer protocol rather than CCN protocol. Therefore, addressing both problems in the CNN framework using a low-overhead decentralized approach is lacking in the literature.
To address these challenges, we propose a novel initialization scheme that aims at maximizing the network sum data throughput, through matching users and selecting the optimal antenna beamwidth for communication. The proposed scheme considers the mmWave band propagation characteristics, such as blockage susceptibility, directional antenna beams, and mobility of users.
\hl{The proposed scheme by considering the content availability and utilizing context information, including users' trajectory and size of the requested data, enables D2D users to find a strategy that maximizes the network's data throughput.}

\begin{table*}[t!]
	\centering
	\begin{minipage}{.95\textwidth}
		{\scriptsize
			\renewcommand{\arraystretch}{1.2}
			\caption{Summary of Notations}\label{tab:Notation}
			\centering
			\begin{tabular}{|p{1.5cm}|p{6cm}|p{1.5cm}|p{6.3cm}|}
				\hline \textbf{Symbol}&\textbf{Description}&\textbf{Symbol}&\textbf{Description}\\
				\hline
				$\mathcal{M}$, $\mathcal{N}$, $\mathcal{L}$ & Set of D2D transmitters, and D2D requesters and links.
				&$P^{LOS}_{m,n}$ & Probability of LOS link.\\
				$\theta$, $\varphi$ & Antenna angle, antenna beamwidth.
				&$G$, $g$ & Antenna Main-lobe and side-lobe.\\
				$T_{m,n}^A$, $T_P$&Beam alignment time, pilot transmission time.
				&$V_{m,n}$, $\mu_{m,n}$& Relative speed and angle of DT $m$, DR $n$.\\
				$\Delta\mu_{m,n}$, $\alpha$& Misalignment angle and misalignment threshold.
				&$T_{m,n}^S$, $\psi_m$ & Link stability time, sector-level beamwidth.\\
				$T_{PA}$, $T_{BM}$& Peer association time, beamwidth management time.
				&$T_{R}$, $T_{D}$&Time for: PDB-reply, decision making.\\
				$C_m^p$, $R_n^p$& Content $p$: cached in DT $m$, requested by DR $n$.
				&$T_{ACK}$& Time for acknowledgment.\\
				$r_{m,n}$, $\xi_{m,n}$& Data rate, data throughput.
				&$U^{PA}_{n,m}$ & Utility DR $n$ achieve by matching with DT $m$.\\
				$\beta$& Blockage parameter.
				&$U_l$, $u_l$&Aggregate utility of link, individual utility.\\
				$\mathcal{H}_l$ & Set of neighboring D2D links of link $l$.
				&$I_T$, $D_T$&Interference threshold, coverage area.\\				
				$\mathcal{G}_b$& Beamwidth selection game.
				&$P(\mathcal{A})$&Set of probability distributions over $\mathcal{A}_l$.\\					
				$\pi _{\mathbf{a}}$& Steady-state of the game.
				&$\Omega$& set of selected beamwidth.\\
				$T^S_{\max}$, $\delta^p_{\max}$ & Normalization factors.
				&$\mathcal{M}(n)$& Set of feasible DTs for DR $n$.\\
				$f$ & Number of trials in Algorithm \ref{alg:peerAssoc}
				&$T_{max}$&Maximum number of trials in Algorithm \ref{alg:beamwidth}.\\
			\hline
			\end{tabular}
		}
	\end{minipage}\hfill
\end{table*}

\begin{figure}
\centering
\includegraphics[width=1\linewidth, trim={1cm 1cm  .8cm  12cm },clip]{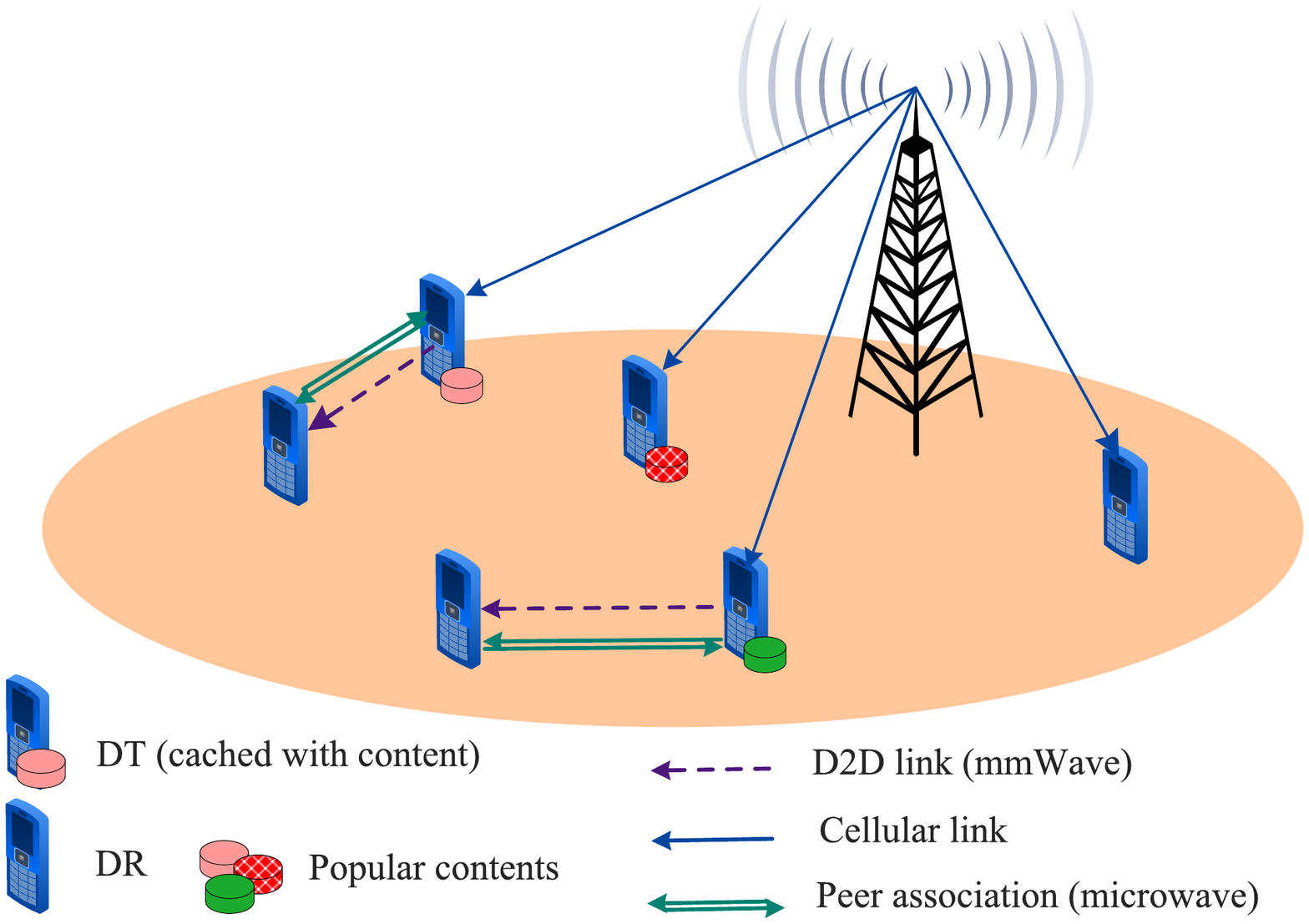}
\caption{Network schematic: DRs retrieve popular contents through D2D links from nearby DTs. The contents are cached in DTs. If the desired content is not found in nearby DTs, DR retrieves the content via cellular links from BS.}
\label{fig:scheme}
\end{figure}

\section{System Model and Assumptions} \label{sec:systemModel}

This section elaborates on system model specifications. In addition, some parameters are defined, such as link stability time, beam alignment overhead, and content segment availability, that will be used later in Section \ref{sec:ProposedScheme}.

\subsection{Network Topology}\label{sec:topology}
\hl{Consider a CCN-based D2D network overlaid the cellular network. The CCN-based D2D network is composed of two entities, namely, D2D transmitters (DTs) and D2D requesters (DRs). Using the CCN protocol DTs are cached with the network's popular contents, while DRs retrieve the popular contents from nearby DTs, as shown in Figure \ref{fig:scheme}.
The data packets are transmitted through establishing D2D links in the mmWave band, operating under time division duplexing (TDD). However, since mmWave band signals are susceptible to blockage, the peer association is performed in dedicated channels in the microwave band.} Let $\mathcal{M}=\{1,..., M\}$ and $\mathcal{N}=\{1,..., N\}$ denote the set of DTs and DRs in the network, respectively. In this system model, D2D communication is used to offload the cellular network, however, if DRs cannot retrieve their desired contents through D2D links, they switch to the cellular network. The details of this process will be discussed in Section \ref{sec:peer association}.
\begin{figure}
\centering
\includegraphics[width=.45\linewidth, trim={11cm 9.7cm  12cm  10cm },clip]{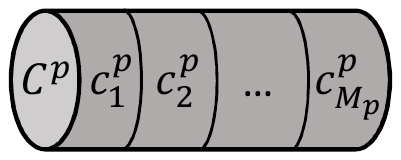}
\caption{Example of a popular content $C^p$ with $M_p$ segments.}
\label{fig:data}
\end{figure}

\subsection{Content and Request Structure}
In the CCN-based network, contents are identified by their name. \hl{Popular contents are cached in DTs' unused storage and are returned in response to the interest message transmitted by DRs.} Each content is fragmented into multiple segments and each segment is addressable. For example, as shown in Figure \ref{fig:data}, we denote a content object $C^p$ cached in DT $m\in \mathcal{M}$ as $C_m^p=\{c^p_{1},..., c^p_{M_p}\}$, where $M_p$ is the number of segments in the content $p$.
$R_n^p=\{r^p_{1},..., r^p_{M_p}\}$
\hl{DRs request the content by transmitting an interest message which includes the name of the desired data packet (instead of the host address).}
Note that the caching policy and \hl{collaboration incentive} are determined by the network hypervisor with the goal to maximize the traffic offloaded from the backhaul network \cite{wang2016hybrid}. In this work, we assume that the popular contents have already been cached in the DTs' storage.
\subsection{Content Segment Availability}
The number of available segments of content $p$ provided by DT $m$ to the DR $n$ can be defined as
\begin{equation}
    \delta_{m,n}^p = |\hspace{.5mm}C^p_m \cap R^p_n\hspace{.5mm}|, \label{eq:conten}
\end{equation}
where $|.|$ denotes the set cardinality. $C_m^p$ and $R_n^p$ represent the set of segments of content $p$, cached in DT $m$ and requested by DR $n$, respectively.
\subsection{Channel Model}
\hl{ To model the mmWave channel, the distance-dependent path-loss model for peer-to-peer communication proposed in \mbox{\cite{rappaport2015wideband}} is adopted.} Under this model the path-loss is defined as $PL(d_{m,n}) = C\hspace{.5 mm}d_{m,n}^{-\alpha}$, where $C$ symbolizes the path-loss intercept, ${\alpha}$ is the path-loss exponent, and $d_{m,n}$ represents the distance between DT $m$ and DR $n$.
Each communication link experiences i.i.d small-scale Nakagami fading with parameter $N_h$. Hence, the received signal power can be modeled as gamma random variable with parameter, $h_{m,n}\sim \Gamma (N_h, 1/N_h)$.
\subsection{Antenna Pattern}
All D2D devices are enabled to perform adaptive directional beamforming in the mmWave band. Beamforming enables users to steer their antenna main-lobe toward the desired direction, as well as adjusting the antenna main-lobe width. Each D2D user can pick a beamwidth from the set of its available beamwidths, $\Phi_i=\{1,...,\varphi_i\}$ for $i \in \mathcal{M}\cup \mathcal{N}$.
\hl{The directional antenna pattern is modeled using the Gaussian model \mbox{\cite{yang2017performance}} as }
\begin{equation}\label{eq:antenna}
G(\theta)\hspace{-1mm}=\hspace{-1mm}\left\lbrace \hspace{-0.5mm}
\begin{array}{ll}
\hspace{-1.2mm}G e^{ -\rho \hspace{1 mm}\theta^2 } \text{,} \hspace{-1mm}&\hspace{-1mm} |\theta|\leq \varphi,\\

\hspace{-1.2mm}g,\hspace{-1mm}&\hspace{-1mm}\text{otherwise,}
\end{array}	\right.
\end{equation}
where $\rho = \frac{2.028 \ln(10)}{\varphi^2}$ and $2\hspace{.5mm}\varphi$ is the antenna half-power beamwidth. $\theta$ denotes the antenna angle
relative to the antenna’s bore-sight direction. \mbox{$G = \frac{\pi 10^{2.028}}{42.64\hspace{.5mm}\varphi\hspace{1mm}+\hspace{1mm}\pi}$} and \mbox{$g = 10^{-2.028}\hspace{.5mm}G$} are the maximum main-lobe gain and the side-lobe gain, respectively \mbox{\cite{yang2017performance}}.
\begin{figure}[t]
    \centering
    \includegraphics[width=.53\linewidth, trim={10.5cm 9.8cm  11.5cm  8.3cm },clip]{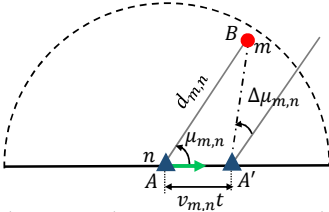}
    \caption{The blue triangle represents DR $n$ located at point $A$ and the red circle represents DT $m$ located at point $B$. The green arrow shows the relative trajectory of DR $n$.}
    \label{fig:linkStability}
\end{figure}

\subsection{Link Stability Time}

A D2D link is stable and proper for data transmission as long as its D2D transmitter and receiver stay aligned. Misalignment in directional communication, due to the users' mobility, occurs when the received power cause drops less than a certain ratio, denoted by $\alpha \in [0,1]$.

Consider a given D2D link whose receiver and transmitter are located at point $A$ and $B$, respectively, as shown in Figure \ref{fig:linkStability}. Assume that the transceivers' antenna beams are aligned.
Also, the receiver is moving with relative velocity $V_{m,n}$ in the direction of the relative angle of $\mu_{m,n}$ (with respect to its antenna bore-sight direction). Since the bore-sight angle of D2D transceivers is fixed, the movement will cause beam misalignment. The pointing error of the D2D receiver toward its transmitter $\Delta t$ seconds later, denoted by $\Delta \mu_{m,n}$, can be obtained using the law of sines in triangle $ABA'$ as
\begin{equation}
   \frac{\sin (\Delta \mu_{m,n})}{V_{m,n}\Delta t} =\frac{\sin (\mu_{m,n})}{d_{m,n}},\nonumber
\end{equation}
where $d_{m,n}$ denotes the D2D links distance. Note that although mobility changes $d_{m,n}$, the impact of distance difference is neglected and only the impact of movement on the angular difference is considered. Also, we assume that $V_{m,n}\hspace{.5mm} \Delta t \hspace{.5mm} \ll\hspace{.5mm}  d_{m,n}$. For small $\Delta \mu_{m,n}$, we estimate $\sin(\Delta \mu_{m,n})$ as $\sin(\Delta \mu_{m,n}) \simeq \Delta \mu_{m,n}$, therefore,
\begin{equation}
    \Delta \mu_{m,n} \simeq \frac{V_{m,n}\hspace{.75mm}\Delta t\hspace{.75mm}\sin (\mu_{m,n})}{d_{m,n}}.\label{eq:delta_mu}
\end{equation}

Based on the definition, the link is stable if the relative antenna gain is above a certain ratio, $\alpha \in [0,1]$.
\begin{equation}
     \alpha =\frac{G(\theta= \Delta \mu_{m,n})}{G(\theta=0)}= e^{-\rho \hspace{1 mm}\Delta  \mu_{m,n}^2},\label{eq:alpha}
\end{equation}
Using \eqref{eq:delta_mu}  and \eqref{eq:alpha} the link stability time, denoted by $T_{m,n}^S$, can be written as
\begin{equation}
    T^S_{m,n} = \frac{d_{m,n}\hspace{.5mm}\varphi_n}{V_{m,n} \hspace{.5mm} \sin (\mu_{m,n})}\sqrt{\frac{\ln (\frac{1}{\alpha})}{2.028\hspace{1mm}\ln  (10)}}.\label{eq:linkS}
\end{equation}
It can be seen that higher antenna beamwidth and lower gain threshold increase the link stability. Moreover, lower relative speed guarantees D2D links to be stable for longer.

\subsection{Beam Alignment Overhead}
Beam alignment between DT-DR requires sending and receiving multiple pilot signals. In this work, the hierarchical beam alignment method is used, where first the best wide-beam pair is found through an exhaustive search, and then the search is refined using a narrower beam level within the subspace of the wide-beam pair \cite{7914759}.
Assuming the antenna wide-beam pairs are already aligned, the narrow-beam alignment time \cite{shokri2015beam} can be written as
\begin{equation}
T_{m,n}^A =\left \lceil{\frac{\psi_m}{\varphi_m}}\right\rceil \left\lceil{\frac{\psi_n}{\varphi_n}}\right\rceil T_P,\label{eq:alignD}
\end{equation}
in which $\psi_i$ and $\varphi_i$ denote the wide- and narrow- level beamwidth of D2D user $i$, respectively.
$T_P$ represents the pilot signal transmission time. Note that although narrower antenna beamwidth provides higher antenna gain based on \eqref{eq:antenna}, it increases beam alignment overhead based on \eqref{eq:alignD}.
Since beam alignment time must be less than the link stability time, $T^S_{m,n}$, the lower bound on feasible beamwidths can be derived as
\begin{equation}
\varphi_m \varphi_n\geq \psi_m\psi_n \hspace{3pt}\frac{ T_P}{T^S_{m,n}}\label{eq:lowerBoundBeam}.
\end{equation}

On the other hand, the antenna beamwidth cannot be higher than wide-level beamwidth. Therefore, the upper bound of antenna beamwidth can be written as $\varphi_m \leq \psi_m$ and $\varphi_n \leq \psi_n$.

\subsection{Timing and Signaling Structure}
Due to the lack of central controller D2D  users are required to initialize the communication through signaling. It is worth noting that initialization is performed through the common control channel (CCC) in microwave bands.
As shown in Figure \ref{fig:frame}, initialization includes peer association and beam management with a duration of $T_{PA}$ and $T_{BM}$, and must be performed prior to data transmission.

Peer association with duration of $T_{PA}$ starts with a DR broadcasting the peer discovery beacon (PDB). It consists of three phases, a) PDB-reply with the duration of $T_R$, where DTs reply back to the PDB signal with their context information, b) decision making with duration $T_D$, when DRs decide on their fittest DT, and c) acknowledgment with duration $T_{\text{ACK}}$. The details of the peer association algorithm will be discussed further in Section \ref{sec:peer association}.

Following the peer association, a given matched DT-DR $(m,n)$ has a time duration of $T_{m,n}^S$ to perform the beam management and transmit data. Beam management is implemented for aligning antenna beam and selecting the antenna beamwidth for data transmission. The beam management time $T_{BM}$ which consists of $T_{m,n}^A$ and $T_{BWS}$ depends mainly on the selected antenna beamwidth as per \eqref{eq:alignD}. The beamwidth selection algorithm and the trade-off between antenna beamwidth and data throughput will be discussed in-depth in Section \ref{sec:game}.

\begin{figure}[t]
	\centering
	\includegraphics[width=1\linewidth, trim={7.5cm 7cm  8.2cm  9.6cm },clip]{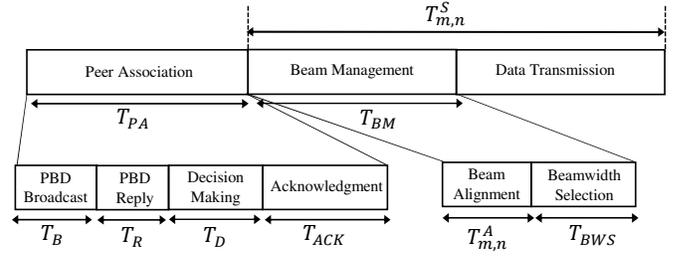}
	\caption{CCN-based D2D communication frame structure consists of peer association, beam management and data transmission slots. Peer association duration is fixed, while beam management duration and data transmission time  depend on the antenna beamwidth.  }
	\label{fig:frame}
\end{figure}

\section{Problem Formulation}\label{sec:problemFormulation}
Establishing directional D2D links in the mmWave band requires two crucial steps, i.e., peer associations and beamwidth selection. In this section, we elaborate on formulating the sum data throughput maximization problem in the CCN-based D2D network with respect to peer association and antenna beamwidth selection.

Let $\mathcal{A}=\{a_{m,n}: m\in\mathcal{M},\:n\in\mathcal{N}\}$ denotes the set of all possible DT-DR associations in the network, where $a_{m,n}$ as the association parameter can be defined as
\begin{equation}
a_{m,n}=\left\lbrace
\begin{array}{ll}
1, &\text{if link between $m$ and $n$ exists,}\\
0, &\text{otherwise.}\nonumber\\
\end{array}
\right.
\end{equation}

We future define $\mathcal{M}(n)\subseteq \mathcal{M}$ as the
subsets of feasible DTs for DR $n$, and the feasibility conditions are defined as
 \begin{equation}
     \mathcal{M}(n) = \big \{m \in \mathcal{M} \hspace{4 pt} \big| \hspace{4 pt} d_{m,n} \leq d_{T}, \hspace{4 pt} \sigma_{m,n}=1  \big \},\nonumber
 \end{equation}
where $d_{m,n} $ and $d_{T}$ represent the Euclidean distance between DT $m$ and DR $n$, and the coverage distance of DR $n$, respectively.
Since mmWave band communication undergoes severe attenuation in non-line-of-sight (NLOS) links \cite{rappaport2013millimeter}, we assume that a D2D link can be established only when the LOS link exists. Parameter $\sigma_{m,n}$ captures the impact of blockage and is modeled as a Bernoulli random variable
\begin{equation}
\sigma_{m,n}=\left\lbrace
\begin{array}{ll}
1, &\text{if LOS exists},\\
0, &\text{otherwise.}\nonumber\\
\end{array}
\right.
\end{equation}
\hl{It is worth noting that using a random Boolean scheme of rectangles to model the blockages, a link of length $d_{m,n}$ is LOS with probability $P^\text{\tiny LoS}_{m,n}=\exp(-\beta \hspace{.5mm} d_{m,n})$, where parameter $\beta$ depends on the average size and density of blockages \mbox{\cite{bai2014analysis}}}.
\hl{Users can estimate the location of neighboring users \mbox{\cite{nitsche2015steering}}, and determine whether the link is LOS \mbox{\cite{bahadori2018device}} using angle of arrival (AoA) spectrum method.}

Let the set of selected beamwidths of associated DT-DR pairs for data transmission denotes as
\begin{equation}
    \Omega = \left\{ (\varphi_{m}, \varphi_{n})\hspace{4 pt}\big|\hspace{4 pt} n\in\mathcal{N},\:m\in\mathcal{M}(n), \hspace{.5mm} a_{m,n}=1\right\}.\nonumber
\end{equation}

The achievable data-rate on a given D2D link between DT $m$ and DR $n$, which depends on the set of paired D2D users $\mathcal{A}$, as well as their antenna beamwidth $\Omega$ can be defined as
\begin{equation}
r_{m,n}\left(\mathcal{A},\Omega\right)=\gamma \hspace{.5mm} B\log_2\left(1+\text{SINR}_{n}(\mathcal{A},\Omega)\right), \mbox{}\nonumber
\end{equation}
where $\gamma = \left(1-\frac{T^A_{m,n}}{T^S_{m,n}}\right)$ captures the impact of beam alignment overhead, and $B$ denotes the available bandwidth. The achieved signal-to-noise-plus-interference-ratio (SINR) on DR $n$ is denoted by $\text{SINR}_{n}$.
\hl{Moreover, The data throughput of a given D2D link which is defined as the amount of the data that is transmitted on the link during the link stability time \mbox{$T^S_{m,n}$}, can be written as}
\begin{equation}
    \xi_{m,n} \left(\mathcal{A},\Omega\right) = \frac{r_{m,n} \left(\mathcal{A},\Omega\right) \times T^S_{m,n}}{\delta^p_{m,n}}.\label{eq:dataThrou}
\end{equation}

\hl{The problem addressed in this work can be formulated
as designing a peer association and antenna beamwidth selection algorithms such that they maximize the network sum-throughput as}
\begin{subequations} \label{eq:mainOptprb}
	\begin{align}
		&\underset{\mathcal{A},\Omega}{\text{Maximize}}&&\hspace{-2.2mm}\sum\limits_{m\in\mathcal{M}}\sum\limits_{n\in\mathcal{N}} a_{m,n} \hspace{1mm}  \xi_{m,n} \label{eq:fitness}\\
		& \quad\;\text{subject to:} &&
		\hspace{-2.2mm}\sum\limits_{n\in \mathcal{N}}a_{m,n}=1, \hspace{4pt}\forall m \in \mathcal{M}, \label{eq:matchres1}\\
		&&&\hspace{-2.2mm}\sum\limits_{m\in \mathcal{M}}a_{m,n}=1,  \hspace{4pt}\forall n \in \mathcal{N}, \label{eq:matchres2}\\
		&&&\hspace{-2.2mm}a_{m,n}\hspace{-0.8mm} \in\hspace{-0.8mm} \{0,1\},\hspace{4pt} \forall m,n\in \mathcal{M}\hspace{-0.8mm}\times\hspace{-0.8mm}\mathcal{N}, \label{eq:matchres3}\\
		&&&\hspace{-2.2mm}\varphi_m\:\varphi_n\hspace{-1mm}\geq\hspace{-1mm} \psi_m \psi_n \frac{T_P}{T^S_{m,n}}, \hspace{4pt} \forall m,n\in \mathcal{M}\hspace{-0.8mm}\times\hspace{-0.8mm}\mathcal{N},\label{eq:antennaCon}\\
		&&&\hspace{-2.2mm}\varphi_m \leq \psi_m \label{eq:mainOptprb_g},\hspace{4pt} \forall m \in \mathcal{M},\\
		&&&\hspace{-2.2mm}\varphi_n\leq \psi_n \label{eq:mainOptprb_h}, \hspace{4pt}\forall n \in \mathcal{N}
	\end{align}
\end{subequations}
where constraints \eqref{eq:matchres1}-\eqref{eq:matchres3} show that the matching among DTs and DRs must be one to one. Constraints \eqref{eq:antennaCon}-\eqref{eq:mainOptprb_h} represent the antenna beamwidth upper and lower bound according to \eqref{eq:lowerBoundBeam}.
The optimization problem \eqref{eq:mainOptprb} is an NP-hard combinatorial optimization problem, which can be solved using centralized exhaustive search algorithms. However, utilizing the assistance of the central controller (base station) is against the main purpose of implementing D2D communication for offloading the network. Restricted access to global information in our application motivates us to seek for low-complexity and low-overhead algorithms that enable D2D users to pick strategies that maximize the network data throughput.

\vspace{-5pt}
\section{Proposed Scheme}\label{sec:ProposedScheme}

For analytical tractability, the optimization problem in \eqref{eq:mainOptprb} is decomposed into two separate problems, namely, peer association and beamwidth selection.
The optimization problem \eqref{eq:mainOptprb} with constraints \eqref{eq:matchres1}-\eqref{eq:matchres3} denotes as a peer association problem. A classic approach to solve this problem is modeling it as a matching game with a well-defined utility function \cite{  perfecto2017millimeter, namvar2015context}, to which deferred acceptance algorithm (DAA) provides a polynomial-time converging solution \cite{gale1962college}. DAA obtains the stable mapping among two sets of D2D links with size $n$ (DTs and DRs),  given an ordering of preferences for each D2D link.
However, the DAA's convergence is time-consuming.
To reduce the signaling overhead and initialization delay, we propose a heuristic algorithm for pairing D2D user, which saves D2D users computational time resources to select the optimal antenna beamwidth.

The optimization problem \eqref{eq:mainOptprb} with constraints \eqref{eq:antennaCon}-\eqref{eq:mainOptprb_h} denotes as the antenna beamwidth selection problem.
The problem of antenna beamwidth selection is modeled as a strategic game, which is proved to be an exact potential game. Further, we show in Section \ref{sec:game} that this problem is guaranteed to have at least one Nash equilibrium. A synchronous Log-linear learning (LLL) based algorithm is proposed to obtain the optimal Nash equilibrium of the game. Thanks to the short-range and directional communication in the mmWave band \cite{rappaport2013millimeter}, only neighboring users are required to exchange information, which accelerates the convergence of the algorithm significantly.
\hl{Note that the dependency between these two problems is relaxed as its exact identification depends highly on network technology. The extensive simulation results in Section \mbox{\ref{sec:NumericalResults}} shows that despite relaxing the interdependency of these problems, implementing the proposed scheme still improves network data throughput significantly compared to existing approaches. Therefore, the interdependency can be relaxed for tractability.}

\subsection{Decentralized Heuristic Algorithm for Peer Association}\label{sec:peer association}

We propose a decentralized, high-speed peer association algorithm with a low computational load that enables DRs to retrieve their desired content from neighboring DTs through stable D2D links. Implementing such an algorithm reduces the initialization overhead considerably. We assume that a) all DTs are willing to share content within their storage unconditionally, and b) the peer association is initiated and decided by DRs.
\hl{In our CCN-based D2D network, peer association starts with a DR broadcasting a peer discovery beacons (PDB) over the common control channel (CCC) that includes $R^p_n$ to indicate its interest in a specific content.}

To select the proper DT, DRs must be able to determine the utility that is gained by matching with the neighboring DTs. The utility function of the DR $n$ through matching with DT $m \in \mathcal{M}(n)$ can be defined as
\begin{equation}
U^{PA}_{n,m}= \frac{T_{m,n}^S}{T^S_{\max}}+\frac{\delta_{m,n}^p}{\delta^p_{\max}},\label{eq:utilityPeer}
\end{equation}
where $T^S_{\max}$ and $\delta^p_{\max}$ are normalization factors, which can be predefined by the users or network. The first term in the utility function captures the stability of the communication link, while the second term indicates the content segment availability. Under the same link stability time, DR $n$ is encouraged to select the DT that provides a higher number of segments of its desires content. It is worth noting that to calculate the $T^S_{m,n}$, all users assume the narrowest antenna beamwidth is implemented. This assumption will be later adjusted in Section \ref{sec:game}.
The details of the D2D peer association in the CCN-based D2D network in mmWave band is given in Algorithm \ref{alg:peerAssoc}.
\begin{itemize}
    \item \hl{ Each DR broadcasts a PDB signal over CCC in the microwave band including the name of the data packet and the desired data segments, $R_n^p$. DR waits for the duration of $T_R$ to receive a response from their neighboring DTs.} Neighboring DTs that received PDB calculate the content segment availability $\delta_{m,n}^p$ as per \eqref{eq:conten}, and packet it along with its context information including geographical position, velocity, and moving direction into PDB-reply and send it back to the DR (lines \ref{algLine:4}-\ref{algLine:5}).
    \item During $T_D$, the DRs that received PDB-reply, calculate the utility that is achieved from matching with each responding DT using \eqref{eq:utilityPeer}, select the fittest DT, and broadcast its decision to its neighbors. DR waits for $T_{\text{ACK}}$ to receive an acknowledgment (ACK) signal from the corresponding DT. In case DR receives ACK signal, the matching is announced by sending back another an ACK signal (lines \ref{algLine:7}-\ref{algLine:11}).
    \item \hl{In case DR has not received any response during $T_R$ or $T_{\text{ACK}}$, failure counter $t_n$ is incremented. If DR fails to retrieve its content from its neighboring DTs for $f$ consecutive times, it will attempt to retrieve its desired content from the cellular network \mbox{(lines \ref{algLine:fail}-\ref{algLine:switch})}.}
    \item At the end of each time frame, $R_n^p$ is updated, and in the case the desired content is not fully received, DR attempts to request for the remaining segments (lines \ref{alg:update}-\ref{alg:content}).
\end{itemize}
\hl{Note that number is trials $f$ can be defined by the D2D users according to the application.}
The proposed algorithm enables DRs to make the association decision with constant computation load. The outcome of the proposed algorithm is the set of established DT-DR links, denoted by $\mathcal{L}$.

\setlength{\textfloatsep}{0.1cm}
\begin{algorithm}[t!]
\small
\SetAlgoLined
\DontPrintSemicolon
\KwResult{$\mathcal{L}$: matching $\mathcal{N}$ and $\mathcal{M}(n)$, $\forall n \in \mathcal{N}$}
 \textbf{Initialize}: ($t_n$, $T_{B}$, $T_{R}$, $T_{D}$, $T_{\text{{ACK}}}$) to zero\; \label{line:init}
 \ForEach{$n \in \mathcal{N}$}{
 \eIf{$t_n\leq f$}{\label{step:check}
      \nonl  \textbf{Phase I: Content discovery}\;

- $n$ broadcasts PDB, count down $T_R$ and wait to receive PDB-reply(s).\;\label{algLine:4}
- $m \in \mathcal{M}(n)$ computes $\delta_{m,n}^p$ as per \eqref{eq:conten} and packet it along with its context information into PDB-reply.\;\label{algLine:5}
- $n$ calculates $U_{n,m}$ as per \eqref{eq:utilityPeer}, selects the fittest DT and announces its decision to its neighbors.\;\label{algLine:7}
\nonl \textbf{Phase II: Link establishment}\;
- $m \in \mathcal{M}(n)$ counts down $T_D$ and wait for $n$'s decision. \; \label{algLine:8}
- Selected DT replies with decision ACK. \; \label{algLine:9}
- $n$ counts down timer $T_{\text{ACK}}$ and wait to receive ACK signal.\; \label{algLine:10}
- $n$ sends back an ACK signal and announces the matching.\; \label{algLine:11}
 \If {$(T_R$ \text{\upshape or} $ T_{\text{ACK}})=0$ $\text{\upshape and no response is received }$}{$t_n\hspace{3pt}=$\hspace{3pt}$t_n+1$, and go to step \ref{step:check}.\label{algLine:fail}}
}{DR $n$ switch to cellular mode and retrieve the content from cellular network.\label{algLine:switch}}
-Update $R_n^p$ and $t_n=0$.\label{alg:update}\;
 \If{ $R_n^p \neq \varnothing$}{go to step \ref{line:init}.}\label{alg:content} }
 \caption{Heuristic D2D peer association}\label{alg:peerAssoc}
\end{algorithm}

\setlength{\floatsep}{0.1cm}
\subsection{Decentralized Algorithm for Optimal Beamwidth Selection}\label{sec:game}
Following the peer association phase, our goal is to enable the associated DR-DT pairs $l \in \mathcal{L}$ to select the proper antenna beamwidth for communication such that the beamwidth strategy profile of all users in the network maximizes the network's sum-data throughput.
As mentioned before, there exists a trade-off between antenna beamwidth and the achievable throughput. Picking a narrower antenna beam, although it leads to higher antenna gain based on \eqref{eq:antenna}, it incurs longer beam alignment overhead based on \eqref{eq:alignD}. Consequently, data transmission time and data throughput is reduced as per \eqref{eq:dataThrou}. In addition, narrower antenna beamwidth leads to lower link stability time, according to \eqref{fig:linkStability}. Therefore, one needs to optimize the antenna beamwidth prior to data transmission according to network conditions and context information.

The beamwidth selection strategy of a D2D link impacts  not only its achievable data throughput but other links' in the network based on \eqref{eq:dataThrou}. Thus, a selfish strategy selection that solely maximizes each link's data throughput cannot guarantee to obtain global optimization. This motivates us to model the beamwidth selection problem as a strategic game to consider the interaction among users and to obtain the optimal beamwidth strategy for all D2D links.
Recently, game theory has been used to model the interaction among users in wireless networks, for example, for beam-pair selection \cite{liu2018decentralized}, and channel selection \cite{xu2011opportunistic}.
Information exchange among users is crucial in such games to reach global optimization. In our decentralized system model, this may lead to significant signaling overhead and delay in communication.
However, directional transmission and the short range of communication in the mmWave band reduces the information exchange only to neighboring (i.e., interacting) links instead of the whole network, which in turn reduces the overhead significantly.
For a given established D2D link $l\in \mathcal{L}$, between DR $n$ and DT $m$, the set of interacting D2D links are those whose DTs are a) in the coverage area of the DR $n$, and b) causes interference on DR $n$ that is higher than a predefined threshold $I_T$.
\begin{figure}[t!]
    \centering
    \vspace{-.5cm}
    \includegraphics[width=.6\linewidth, trim={10cm 8.5cm  10cm  6cm },clip]{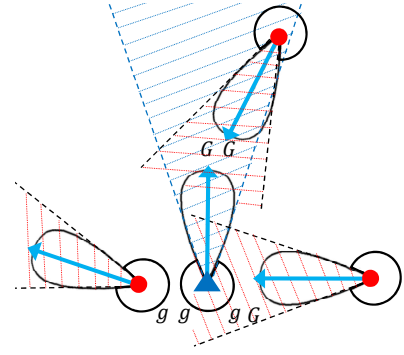}
    \caption{Directional interference: red circles and the blue triangle show the DTs and DR, respectively. $G$ and $g$ denote the main-lobe gain and side-lobe gain of the antenna.}
    \label{fig:directionalInter}
\end{figure}
The set of interacting D2D links of a given D2D link $l$, denoted by $\mathcal{H}_l$ can be defined as
\begin{equation}
     \mathcal{H}_l=\{i\in \mathcal{M} \big | \hspace{1mm}d_{n,i}\leq d_{T}, \hspace{1mm} I_{i} \geq I_{T}\}\label{eq:interaction}
\end{equation}
where $d_{n,i}$ and $d_{T}$  represent the Euclidean distance between DR $n$ and DT $i$ and the coverage distance of DR $n$, respectively.
$I_{i}$ is the directional \hl{LOS} interference caused by DT $i$ on DR $n$. The received interference power is related to the transmission power, the channel gain, and the transmission/reception antenna gain. We assume the same transmission power for all DTs, and also equal channel gains for all D2D links within the same transmission range. Therefore, the received interference is mainly determined by transmission/reception antenna gains which depend on the antenna beamwidth and bore-sight direction, based on \eqref{eq:antenna}. \hl{Note that the impact of NLOS interference is negligible thus only LOS interference is considered \mbox{\cite{thornburg2016performance}}.}

As shown in Figure \ref{fig:directionalInter}, the effective antenna gain of interference can be determined using the antenna boresight direction of the DR and its neighboring DTs. The interference gain, denoted by $g_I$, can be a) $GG$ when both DR and DT have their main-lobes directed towards each other, b) $Gg$ when DR has its main-lobe directed to side-lobe of DT or vice versa, and c) $gg$ when both DT and DR have their side-lobes aligned. Therefore, if DR $n$ knows the location, bore-sight direction and beamwidth of a DT $i$, it can simply calculate the amount of the received directional interference using $I_i = P  h_{i,n}g_I\text{PL}(d_{i,n})$.

\vspace{2mm}
\subsubsection{Game Model}
The beamwidth selection game is denoted as $\mathcal{G}_b = \big[\mathcal{L},\{\mathcal{A}_l\}_{l \in \mathcal{L}}, \{U_l\}_{l \in \mathcal{L}}\big ]$, where $\mathcal{L}$ is the set of players (DT-DR links that are established using Algorithm \ref{alg:peerAssoc}), $a_l \in \mathcal{A}_l$ denotes the strategy of player $l$ (beamwidth), where $\mathcal{A}_l$ is the set of available actions for player $l$ ($\Phi_l$), and $U_l$ represents the utility of player $l$. For simplicity and without loss of generality, we assume that DTs and their corresponding DRs adopt the same beamwidth strategy. This can be extended to the case that users implement different strategies.
$\mathbf{a}_{-l} \in \mathcal{A}_{-l}$ denotes the beamwidth selection profile of all the players excluding player $l$, and $\mathcal{A}_{-l} = \times_{i \in \mathcal{L}\backslash l } \mathcal{A}_{i}$ is the joint strategy space of all the players excluding player $l$, and $\times$ denotes the Cartesian product.

The utility function of D2D link\footnote{Hereafter, we refer to the matched D2D pairs through Algorithm \ref{alg:peerAssoc} as "D2D links" and use subscript $l$ instead of $m,n$.} $l$ which depends on the utility of itself and the utility of its neighboring links can be defined as
 \begin{equation}
	U_l(a_l,\mathbf{a}_{-l})=	u_l(a_l, \mathbf{a}_{\mathcal{H}_l})+\sum_{i \in \mathcal{H}_l}	u_i(a_i,\mathbf{a}_{\mathcal{H}_i})\label{eq:totalUt}
\end{equation}
with
\begin{equation}
u_l(a_l,\mathbf{a}_{\mathcal{H}_l}) = r_l (a_l,\mathbf{a}_{\mathcal{H}_l}) + C \hspace{3pt}\varepsilon \hspace{3pt} \Lambda\left(T_l^\delta,T'\right) \label{eq:indUt}
\end{equation}
where $\mathbf{a}_{\mathcal{H}_l} \in \mathcal{A}_{\mathcal{H}_l} = \times_{i \in \mathcal{H}_l}\mathcal{A}_{i}$ is the set of joint beamwidth selection strategy of neighboring D2D links of D2D link $l$, $\Lambda(x,y)$ is a binary function that is $-1$ if $x > y$ and $0$ otherwise. $T_l^\delta= \frac{\delta^p_{l}}{r_{l}}$ represents the time required to transmit data segments $\delta^p_l$, and $T' = T^S_l-T^A_l$ denotes the time remained to transmit the data after beam alignment. Finally, $C$ and $\varepsilon =  |1-\frac{T'}{T_l^\delta}|$ represents the penalty scalar and penalty coefficient, respectively.
The first term in (\ref{eq:totalUt}) captures the individual utility of D2D link $l$, while the second term is the utility of all $l$'s interacting D2D links. Further, the individual utility of each D2D link depends on its data throughput. To capture this, the leftmost term in (\ref{eq:indUt}) denotes the link's data rate, and the rightmost term captures the trade-off between data throughput and antenna beamwidth in \eqref{eq:alignD}.

It can be seen that the defined utility function in \eqref{eq:totalUt} is in line with the optimization problem in \eqref{eq:mainOptprb} with constraints \eqref{eq:antennaCon}-\eqref{eq:mainOptprb_h}. Therefore,
our objective is to find a joint beamwidth selection profile that maximizes the utility of all active D2D links $l \in \mathcal{L}$. The beamwidth selection game can be defined as
\begin{equation}
    \mathcal{G}_b: \max_{a_l \in \mathcal{A}_l} U_l(a_l,\mathbf{a}_{-l}), \hspace{4 mm} \forall l \in \mathcal{L}. \label{eq:game}
\end{equation}

The Nash equilibrium is the stable solution of the strategic games such as $\mathcal{G}_b$ and can be defined as follows.

\begin{definition} Nash equilibrium (NE):
    A beamwidth selection profile $\mathbf{a}^*=(a^*_1,a^*_2,...,a^*_L)$ is a pure strategy NE point if and only if no D2D link can improve its utility by deviating unilaterally, i.e.,
    \begin{equation}
        U_l(a_l^*,\mathbf{a}_{-l})\geq U_l(a_l,\mathbf{a}_{-l}) \hspace{3 mm} \forall l \in \mathcal{L}, \hspace{3 mm} a_l^* \neq a_l
    \end{equation}
\end{definition}

Although NE is the steady-state of strategic games, an important question is whether the beamwidth selection game will reach a steady-state (NE) eventually.
A given utility function may have multiple Nash equilibria or may not have any. \hl{Therefore, it is crucial to verify that at least one NE exits for $\mathcal{G}_b$.}
The properties of NE of beamwidth selection game $\mathcal{G}_b$ are characterized by the following theorems.

\begin{theorem}\label{th:potential}
    Beamwidth selection game $\mathcal{G}_b$ is an exact potential game with potential function $\Theta(a_l,\mathbf{a}_{-l})=\sum_{l \in \mathcal{L}}u_l(a_l,\mathbf{a}_{-l})$, which has at least one pure NE, and optimal solution of the network data throughput maximization problem constitutes a pure strategy NE of $\mathcal{G}_b$.
\end{theorem}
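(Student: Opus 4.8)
The plan is to establish the three assertions in sequence: that $\Theta$ is an exact potential for $\mathcal{G}_b$; that an exact potential game on a finite strategy space admits a pure-strategy Nash equilibrium; and that the throughput-optimal beamwidth profile is such an equilibrium. I would write the potential explicitly as $\Theta(\mathbf{a})=\sum_{j\in\mathcal{L}}u_j(a_j,\mathbf{a}_{\mathcal{H}_j})$ with $u_j$ as in \eqref{eq:indUt}, and record up front the one structural fact the argument rests on: the interaction relation is symmetric and geometry-determined, i.e. $i\in\mathcal{H}_l$ iff $l\in\mathcal{H}_i$ (two links that interfere, interfere mutually), and each $\mathcal{H}_l$ is fixed by link positions and coverage radii rather than by the current beamwidths, so it is unchanged under a unilateral deviation.

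For the exact-potential identity, fix a link $l$, an others' profile $\mathbf{a}_{-l}$, and two actions $a_l,a_l'$; set $\mathbf{a}=(a_l,\mathbf{a}_{-l})$ and $\mathbf{a}'=(a_l',\mathbf{a}_{-l})$. Since the summand $u_j(a_j,\mathbf{a}_{\mathcal{H}_j})$ of $\Theta$ can depend on $a_l$ only when $j=l$ or $l\in\mathcal{H}_j$, and by symmetry $\{j:l\in\mathcal{H}_j\}=\mathcal{H}_l$, every other summand is identical at $\mathbf{a}$ and $\mathbf{a}'$, so that
\[
\Theta(\mathbf{a})-\Theta(\mathbf{a}')=\big[u_l(\mathbf{a})-u_l(\mathbf{a}')\big]+\sum_{i\in\mathcal{H}_l}\big[u_i(\mathbf{a})-u_i(\mathbf{a}')\big].
\]
The right-hand side is exactly $U_l(\mathbf{a})-U_l(\mathbf{a}')$ by \eqref{eq:totalUt}, because a deviation of $l$ changes neither the action $a_i$ of a neighbour nor the set $\mathcal{H}_i$. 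Hence $U_l(\mathbf{a})-U_l(\mathbf{a}')=\Theta(\mathbf{a})-\Theta(\mathbf{a}')$ for every player and every deviation, i.e. $\mathcal{G}_b$ is an exact potential game with potential $\Theta$.

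Existence of a pure NE then follows from the classical finite-potential-game (Monderer--Shapley) argument: each action set $\mathcal{A}_l=\Phi_l$ is finite, so the joint action space $\prod_l\mathcal{A}_l$ is finite and $\Theta$ attains a maximum at some profile $\mathbf{a}^*$; if some player could strictly increase $U_l$ by a unilateral deviation from $\mathbf{a}^*$, then by the identity above it would strictly increase $\Theta$, contradicting maximality, so $\mathbf{a}^*$ satisfies the NE inequality. For the optimality claim, I would argue that, with the matching $\mathcal{L}$ fixed by Algorithm \ref{alg:peerAssoc}, the potential $\Theta=\sum_l u_l$ is aligned with the sum-throughput objective of \eqref{eq:mainOptprb}: the data-rate term in \eqref{eq:indUt} pushes each $u_l$ in the direction of larger throughput, while the penalty $C\,\varepsilon\,\Lambda(T_l^\delta,T')$ rules out beamwidths violating the alignment-time feasibility constraints \eqref{eq:antennaCon}--\eqref{eq:mainOptprb_h}; hence a beamwidth profile that maximises $\sum_l\xi_l$ over the feasible region also maximises $\Theta$, and by the previous step any maximiser of $\Theta$ is a pure NE of $\mathcal{G}_b$, which gives the assertion.

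I expect the crux to be the exact-potential step, and in particular the two modelling assumptions it silently uses: that $\mathcal{H}_l$ is a symmetric relation and that it does not move when a single player re-tunes its beam. Without these, a deviation would perturb additional summands of $\Theta$ and the identity would hold only approximately, degrading the game to an ordinal/weighted potential game. The remainder is essentially bookkeeping plus the standard finite-potential-game facts, together with a short verification in the last part that the penalised utility is genuinely aligned with the sum-throughput on the feasible set.
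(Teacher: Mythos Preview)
Your proposal is correct and follows precisely the standard local-altruistic potential-game argument that the paper uses (the proof, deferred to Appendix~\ref{ap:potential}, splits $\Theta(\mathbf a)-\Theta(\mathbf a')$ into the $j=l$ term, the $j\in\mathcal H_l$ terms, and the rest, then matches the first two against $U_l$ via \eqref{eq:totalUt} exactly as you do). Your explicit flagging of the two silent hypotheses---symmetry of the interaction relation and action-independence of each $\mathcal H_l$---is on point and is indeed what makes the cancellation go through; the remaining Monderer--Shapley step and the ``maximiser of $\Theta$ is a NE'' observation are the same as in the paper.
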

\begin{proof}
    See Appendix \ref{ap:potential}.
\end{proof}

All potential games share the finite improvement property (FIP). According to FIP, letting a player deviate to a better strategy using the best response dynamics, terminates to a NE in a finite number of iterations \cite{monderer1996potential}. As Nash equilibria are the maximizers of the potential function, and the potential function $\Theta$, represents the sum of the network's data throughput, therefore, the optimal beamwidth selection profile that maximizes the network data throughput can be achieved by finding the optimal NE points of the game $\mathcal{G}_b$. However, in case that multiple Nash equilibria exist, an efficient learning algorithm is required to achieve optimal NE.

\vspace{2mm}
\subsubsection{Decentralized Algorithm to obtain the optimal NE}
As it is already mentioned, applying the best response dynamics eventually leads to a pure NE. However, the best response algorithm does not guarantee to converge to the optimal pure NE in case potential function $\Theta$ has multiple optimums. Log-linear learning (LLL) is a classical algorithm that guarantees the convergence to a set of optimal pure NEs of an exact potential game \cite{marden2012revisiting}. The algorithm follows the same procedure as the best response algorithm \cite{monderer1996potential}, however, it allows the possibility of exploration by deviations from the best response with a small probability. In LLL, at each time step ($k$) one randomly chosen player $l$, is allowed to alter its strategy according to a mixed strategy $p_l^{a_l} \in P (\mathcal{A}_l)$, where $P (\mathcal{A}_l)$ is the the set of probability distributions over $\mathcal{A}_l$.
Meanwhile, the actions of all other players remain fixed, i.e., $\mathbf{a}_{-l}(k+1)=\mathbf{a}_{-l}(k)$. D2D link $l$’s mixed strategy is updated based on Boltzmann rule and is given by
\begin{equation}
    p_l^{a_l}(k+1)=\frac{\exp\left(\frac{1}{\tau}U_l(a_l,\mathbf{a}_{-l}(k))\right)}{\sum_{\bar{a}_l\in \mathcal{A}_l}\exp\left(\frac{1}{\tau}U_l(\bar{a}_l,\mathbf{a}_{-l}(k))\right)}\label{eq:Boltzmann}
\end{equation}
where $\tau$ is the learning parameter. For very large $\tau$, the strategy of each D2D link is chosen approximately based on a uniform distribution over its set of actions. While, for small $\tau$, the selected strategy is a uniform distribution over best responses against $\mathbf{a}_{-l}$. It can be seen in \eqref{eq:Boltzmann} that if taking an action leads to gaining higher utility compare to other actions, this action has higher chance of being selected in future. Thus, the algorithm will converge to a network optimum eventually.
\setlength{\textfloatsep}{0.1cm}
\begin{algorithm}[t]
\small
\DontPrintSemicolon
    \textbf{Initialization}: $k=0$, $T_{\max}$, set the strategy profile of all users randomly, $\mathbf{a}(0)=\{a_1(0),a_2(0),...,a_L(0)\}$.\;\label{algLine:2-1}
    \SetKwRepeat{Repeat}{repeat}{until}
    \Repeat{ \;
    $\big |U_l(k+T_{\max})-U_l(k)\big| \simeq 0 $ \upshape or,\; \label{algLine:2-7}
    \upshape There exist a component of $P (\mathcal{A}_l)$  which is sufficiently close to 1.\label{algLine:2-8}}
    {\ForEach{D2D links $l \in \mathcal{L}$}{
    Find the set of neighboring users, $\mathcal{H}_l$, using \eqref{eq:interaction}.\;
    Exchange information with neighboring links.}\label{algLine:2-3}
    A set of non-interacting D2D links are selected randomly denoted by $\tilde{\mathcal{L}} \subset \mathcal{L}$.\; \label{algLine:2-4}
    \ForEach{D2D link $l \in \tilde{\mathcal{L}}$}{
    Calculate the utility over its all available actions, $U_l(k)$, $\forall a_l \in \mathcal{A}_l$ using the information received from its neighbors while $\mathbf{a}_{-l}(k+1)=\mathbf{a}_{-l}(k)$.\; \label{algLine:2-5}
    Select an action according to the mixed strategy vector $P (\mathcal{A}_l)$.\;
    Update the mixed strategy vector $P (\mathcal{A}_l)$ using \eqref{eq:Boltzmann}.} \label{algLine:2-6}
    $k=k+1$\;} \label{algLine:2-end}
    \caption{Synchronous LLL-based beamwidth selection}\label{alg:beamwidth}
\end{algorithm}

LLL requires asynchrony, which refers to the fact that D2D links are allowed to update their strategies one at a time. However, it has been shown that this assumption can be relaxed by allowing a group of non-interacting players to update their strategies simultaneously \cite{marden2012revisiting}.
The short-range and directional communication in the mmWave band allows a group of non-interaction D2D links to update their strategies simultaneously, which in turn expedites the convergence of the algorithm. Further, each D2D link only needs to exchange information with its neighboring D2D links $\mathcal{H}_l$, which reduces the signaling overhead significantly. The details of the synchronous LLL-based beamwidth selection algorithm is given in Algorithm \ref{alg:beamwidth}.
\begin{itemize}
    \item Each D2D link selects an initial beamwidth based on a uniform distribution over its available beamwidth set $\Phi_l$. Then determines its neighboring links $\mathcal{H}_l$ using \eqref{eq:interaction} and exchange information including position, bore-sight direction, selected beamwidth and gained utility with neighboring links (lines \ref{algLine:2-1}-\ref{algLine:2-3}).
    \item A set of non-interacting D2D links are selected randomly denoted by $\tilde{\mathcal{L}} \subset \mathcal{L}$. Since  there  is  no  central controller in  the  network,  the users are chosen using  contention mechanisms over CCC \cite{xu2011opportunistic} (line \ref{algLine:2-4}).
    \item Each of selected D2D links calculates its gained utility $U_l(k)$ using \eqref{eq:indUt} over all its possible actions. Then selects a strategy randomly based its mixed strategy distribution $P (\mathcal{A}_l)$ and updates its mixed strategy using \eqref{eq:Boltzmann} (lines \ref{algLine:2-5}-\ref{algLine:2-6}).
    \item This will continue until the utility of a D2D link has not been changed through $T_{\max}$ iterations or one of its mixed strategy elements $p_l^{a_l}$ is sufficiently close to 1 (lines \ref{algLine:2-7}-\ref{algLine:2-8}).
\end{itemize}
The asymptotic behavior of the synchronous LLL-based algorithm, as the iteration
number goes sufficiently large, can be defined using the following theorem.

\begin{theorem} \label{th:station}
    For the beamwidth selection game $\mathcal{G}_b$, if all D2D links $l \in \mathcal{L}$ adhere to the synchronous LLL-based beamwidth selection algorithm, the stationary distribution $\pi _{\mathbf{a}} \in P (\mathcal{A})$ of the joint action profile for $\forall \tau > 0$ converges to
    \begin{equation}
        \pi _{\mathbf{a}} = \frac{\exp\left(\frac{1}{\tau}\Theta (\mathbf{a})\right)}{\sum _{\bar{\mathbf{a}}\in \mathcal{A}}\exp\left(\frac{1}{\tau}\Theta(\bar{\mathbf{a}})\right)}\label{eq:stationary}
    \end{equation}
    where $\mathcal{A}=\times _{i\in \mathcal{L}} \mathcal{A}_i$ and $\Theta (.)$ denotes the potential function.
\end{theorem}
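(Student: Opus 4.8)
The plan is to show that Algorithm~\ref{alg:beamwidth} induces an ergodic Markov chain on the finite joint action space $\mathcal{A}=\times_{i\in\mathcal{L}}\mathcal{A}_i$, and to identify its unique stationary distribution as the Gibbs measure \eqref{eq:stationary} by verifying the detailed-balance (reversibility) equations $\pi_{\mathbf{a}}\,P(\mathbf{a},\mathbf{b})=\pi_{\mathbf{b}}\,P(\mathbf{b},\mathbf{a})$ for the one-step transition kernel $P$ of the algorithm. Because $\mathcal{A}$ is finite and, for every $\tau>0$, the Boltzmann rule \eqref{eq:Boltzmann} assigns strictly positive probability to every action of every updating link, the chain is irreducible (each selected link may keep or alter its action with positive probability, so any profile is reachable from any other) and aperiodic (every state carries a self-loop); hence the stationary distribution is unique and it suffices to exhibit one that is reversible.

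I would first handle the asynchronous special case in which the selected set $\tilde{\mathcal{L}}$ is a single link $l$, so that $\mathbf{a}$ and $\mathbf{b}=(b_l,\mathbf{a}_{-l})$ differ only in coordinate $l$. Writing $P(\mathbf{a},\mathbf{b})$ and $P(\mathbf{b},\mathbf{a})$ via \eqref{eq:Boltzmann}, the probability of selecting $l$ and the Boltzmann normalizer $\sum_{\bar a_l}\exp(\frac{1}{\tau}U_l(\bar a_l,\mathbf{a}_{-l}))$ are common to both (the normalizer depends only on $\mathbf{a}_{-l}$, which is unchanged) and cancel, leaving $P(\mathbf{a},\mathbf{b})/P(\mathbf{b},\mathbf{a})=\exp\!\big(\frac{1}{\tau}[U_l(b_l,\mathbf{a}_{-l})-U_l(a_l,\mathbf{a}_{-l})]\big)$. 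By Theorem~\ref{th:potential}, $\mathcal{G}_b$ is an exact potential game with potential $\Theta$, so $U_l(b_l,\mathbf{a}_{-l})-U_l(a_l,\mathbf{a}_{-l})=\Theta(\mathbf{b})-\Theta(\mathbf{a})$, whence $\exp(\frac{1}{\tau}\Theta(\mathbf{a}))\,P(\mathbf{a},\mathbf{b})=\exp(\frac{1}{\tau}\Theta(\mathbf{b}))\,P(\mathbf{b},\mathbf{a})$; dividing by $\sum_{\bar{\mathbf{a}}\in\mathcal{A}}\exp(\frac{1}{\tau}\Theta(\bar{\mathbf{a}}))$ yields detailed balance for $\pi_{\mathbf{a}}$ as in \eqref{eq:stationary}.

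Next I would extend this to the synchronous update. The structural fact that makes this work is that $\tilde{\mathcal{L}}$ is chosen non-interacting, i.e., separated enough in the interaction graph that the action of any $l\in\tilde{\mathcal{L}}$ perturbs neither the utility $U_{l'}$ nor the Boltzmann normalizer of any other $l'\in\tilde{\mathcal{L}}$ --- this is exactly the locality built into \eqref{eq:totalUt}, where $U_l$ sees $\mathbf{a}_{-l}$ only through $\mathcal{H}_l$ and the neighbor sets $\{\mathcal{H}_i\}_{i\in\mathcal{H}_l}$. With the restricted update $\mathbf{a}_{-l}(k{+}1)=\mathbf{a}_{-l}(k)$, the transition probability for a simultaneous move of all links in $\tilde{\mathcal{L}}$ then factorizes into a product of individual Boltzmann factors, each evaluated at the original profile, so the same per-coordinate cancellations as above apply; and telescoping the exact-potential identity along the coordinates of $\tilde{\mathcal{L}}$ gives $\sum_{l\in\tilde{\mathcal{L}}}[U_l(b_l,\mathbf{a}_{-l})-U_l(a_l,\mathbf{a}_{-l})]=\Theta(\mathbf{b})-\Theta(\mathbf{a})$, the cross terms vanishing precisely because the members of $\tilde{\mathcal{L}}$ do not interact. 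Since the law of the selected set $\tilde{\mathcal{L}}$ does not depend on the current state, it cancels between $P(\mathbf{a},\mathbf{b})$ and $P(\mathbf{b},\mathbf{a})$, and summing over the possible sets preserves detailed balance; reversibility together with uniqueness of the stationary distribution then gives \eqref{eq:stationary} for every $\tau>0$. This is the synchronous log-linear learning theorem of \cite{marden2012revisiting} specialized to $\mathcal{G}_b$.

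The step I expect to be the main obstacle is the synchronous extension: making rigorous the sense in which the simultaneously-updating links are ``non-interacting'' (they must be separated in the interaction graph so that no one's move affects another's utility terms or normalizers), and verifying that under this condition the transition kernel genuinely factorizes and the potential increments add with no residual cross terms. The asynchronous case is essentially bookkeeping once Theorem~\ref{th:potential} is available, since the exact-potential property is precisely what makes the Boltzmann normalizers cancel against the ratios of $\pi$; and the ergodicity requirements are immediate from positivity of \eqref{eq:Boltzmann} for $\tau>0$.
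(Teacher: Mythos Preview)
Your proposal is correct and follows the canonical detailed-balance argument for log-linear learning in exact potential games from \cite{marden2012revisiting}, which is precisely the reference the paper invokes for this result; the paper's own proof (deferred to Appendix~\ref{ap:station}) proceeds along the same lines. Your explicit identification of the technical content of ``non-interacting'' --- that links in $\tilde{\mathcal{L}}$ must be separated in the interaction graph so that no simultaneous updater appears in another's $\mathcal{H}_l$ or in any $\mathcal{H}_i$ for $i\in\mathcal{H}_l$, given the two-hop dependence in \eqref{eq:totalUt} --- is exactly the condition needed for the factorization and telescoping to go through, and is a point the paper leaves implicit.
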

\begin{proof}
    See Appendix \ref{ap:station}
\end{proof}

It can be seen in \eqref{eq:stationary} that as $\tau \rightarrow 0$, $\pi_a \rightarrow 1$, LLL converges to the potential function maximizer with high probability. In case multiple maximizers exist, LLL converges to one of the maximizers with a uniform distribution. Therefore, the LLL-based beamwidth selection algorithm converges to the optimal solution of the $\mathcal{G}_b$ with high probability. It is worth noting that parameter $\tau$ captures the trade-off between the exploration of the beamwidth strategies and the speed of convergence. In practice, it is advised to start with a large $\tau$ and keep decreasing as the process iterates. In Section \ref{sec:NumericalResults}, we choose $\tau=\frac{1}{k}$, where $k$ is the algorithm iteration number.

The decentralized LLL-based beamwidth selection algorithm enables D2D users to select the antenna beamwidth considering context information through limited information exchange with neighboring users, which makes it suitable for large-scale D2D networks.

\subsection{Content Sharing Scenario}
One of the main applications of the proposed initialization scheme is content sharing in the cellular network via D2D links in the mmWave band. Our proposed algorithm implements the CCN protocol for content sharing. In this framework, the popular contents are cached in DTs’ unused memory and will be shared/transmitted to nearby DRs on demand.
Equation \eqref{eq:utilityPeer} guarantees that a DR is associated with a DT that is cached with its desired content segments (first term), and also DT can provide the content via a stable D2D link (second term).
Moreover, Algorithm \ref{alg:beamwidth}) optimizes the users’ antenna beamwidth to maximize the D2D network data throughput and assure successful content delivery/sharing.

However, it is worth noting that content sharing has other important virtues such as optimizing content caching \cite{song2017learning}, incentivizing transmitters \cite{wu2017social} and routing \cite{lee2011proxy,6193514}. We assumed that the popular contents are already identified and cached in the DTs by the network and DTs share the content within their storage unconditionally. To address the content popularity evaluation, optimizing content caching, and motivating DTs, the following scenarios can be considered.

In the first scenario, to consider DTs incentive to share the content, the optimization problem \eqref{eq:mainOptprb} must be modified to include the costs associated with involving DTs in content sharing. To motivate DTs, DRs can pay a small monetary fee that covers the fee charged by the DTs’ network services provider and energy consumption. Also, the network can motivate the DT to share its cached content by getting some monthly service fee discount in return for assisting the network in disseminating data.

In the second scenario, to consider content popularity distribution, network can use data from DTs’ social media to identify the popular contents and evaluate the desire of DTs in sharing contents \cite{song2017learning, wu2017social}. In addition, the network can use this information to locate/identify DTs at the hot spot area to cache the popular content in those users. Also, the network can cache the contents in DTs according to their location \cite{kwak2018hybrid}.  For example, network prioritizes edge DTs as candidates for content caching since DRs closer to BS has better channel condition. However, it would be more efficient for the DRs on the cell edge to retrieve their desired content from nearby DTs rather than the BS. In these cases, Algorithm \ref{alg:peerAssoc} should include \textbf{popular content discovery} and \textbf{content caching} prior to content discovery and link establishment.

\begin{table}[t]
\begin{center}
\caption{Simulation parameters}\label{tab:parameters}
\begin{tabular}{ |c|c|  }
\hline
Parameter & Value \\
\hline
Carrier frequency & 28 GHz\\
Communication bandwidth($B$) & 100 MHz \\
Thermal noise density  & -174 dBm/Hz \\
Transmission power & 15 $dBm$\\
Free space path-loss & -61.7 dB\\
LOS path-loss exponent & 2 \\
Antenna beamwidth($\Phi_l$) & [15:10:45]\\
Distance threshold ($D_{T}$)& 50 m  \\
D2D pairs distance($d_l$) & $\sim \mathcal{U}(30,80)$ \\
penalty parameter ($\eta$)  & $\max(r_l)$  \\
Pilot transmission time ($T_p$)& 10 $\mu$s \\
LoS link prob. ($\beta$) &  0.0027 \\
Number of contents ($N_c$) & 5 \\
Velocity of users ($V$) & $\sim \mathcal{U}(1,3)$ $mph$\\
Moving direction ($\theta$) & $\sim \mathcal{U}(-\pi,\pi)$\\
\hline
\end{tabular}
\end{center}
\end{table}

\begin{figure*}[ht]
\centering
\vspace{-.6cm}
	\captionsetup[subfigure]{aboveskip=1pt}
	\begin{tabular}{@{}c@{}}
		\subfloat[]{
			\includegraphics[scale=.55,trim={4.cm 8.5cm  4.5cm  9cm },clip]{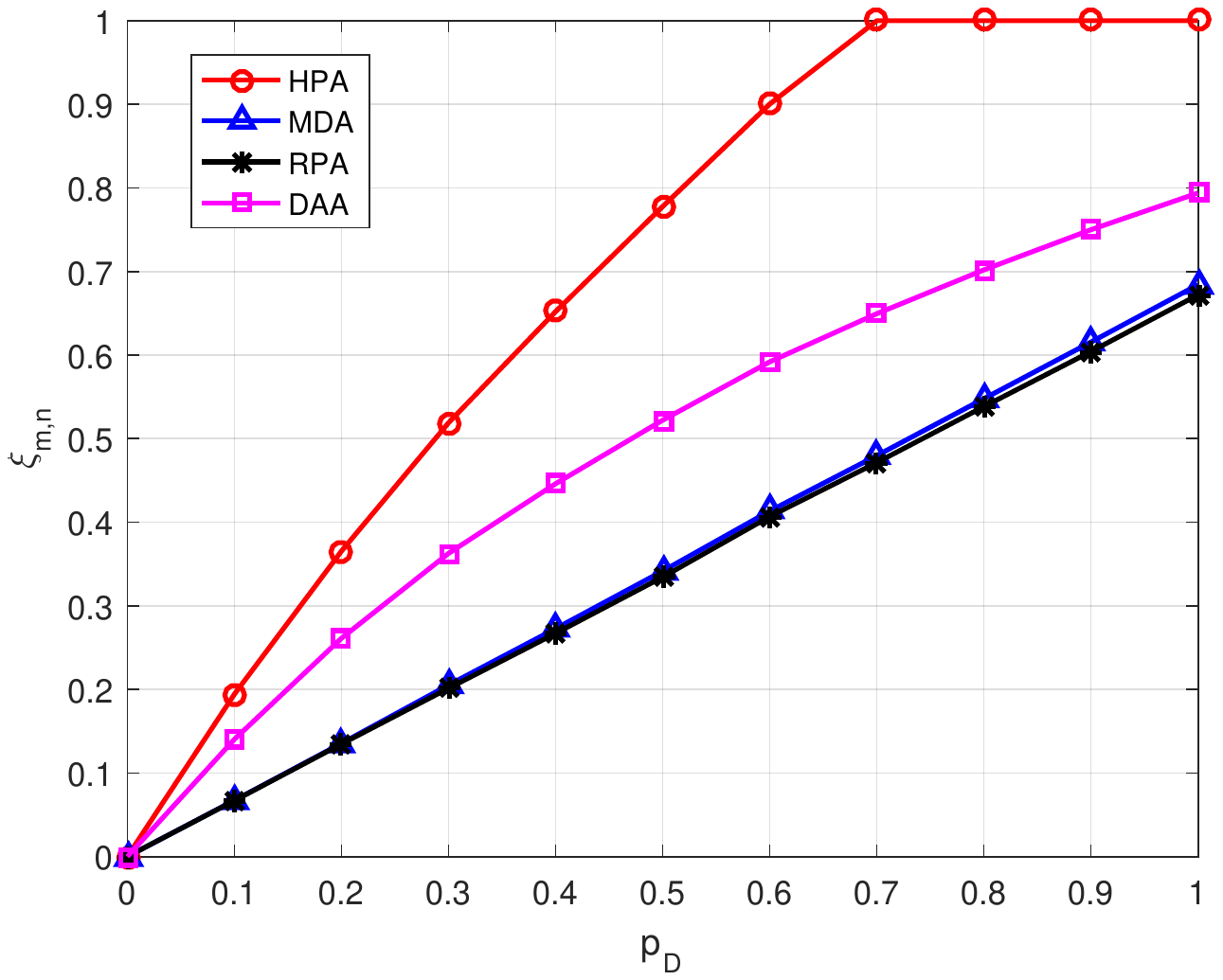}\label{fig:PAprob}}
		\subfloat[]{
			\includegraphics[scale=.55,trim={4cm 8.5cm  4.5cm  9cm },clip]{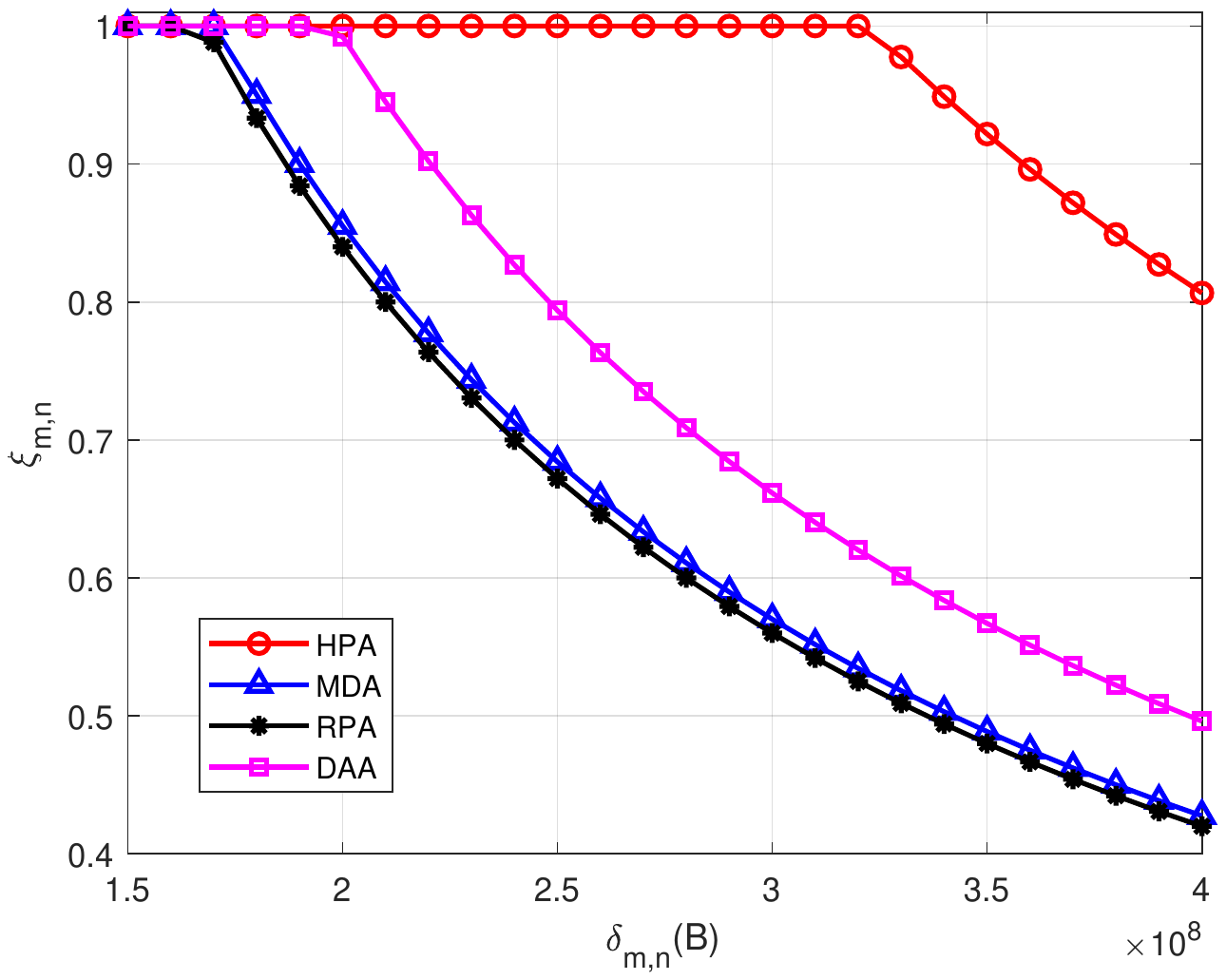}\label{fig:xidelta}}
	\end{tabular}
	\begin{tabular}{@{}c@{}}
			\subfloat[]{
			\includegraphics[scale=.55, trim={4cm 8.5cm  4.5cm  9cm },clip]{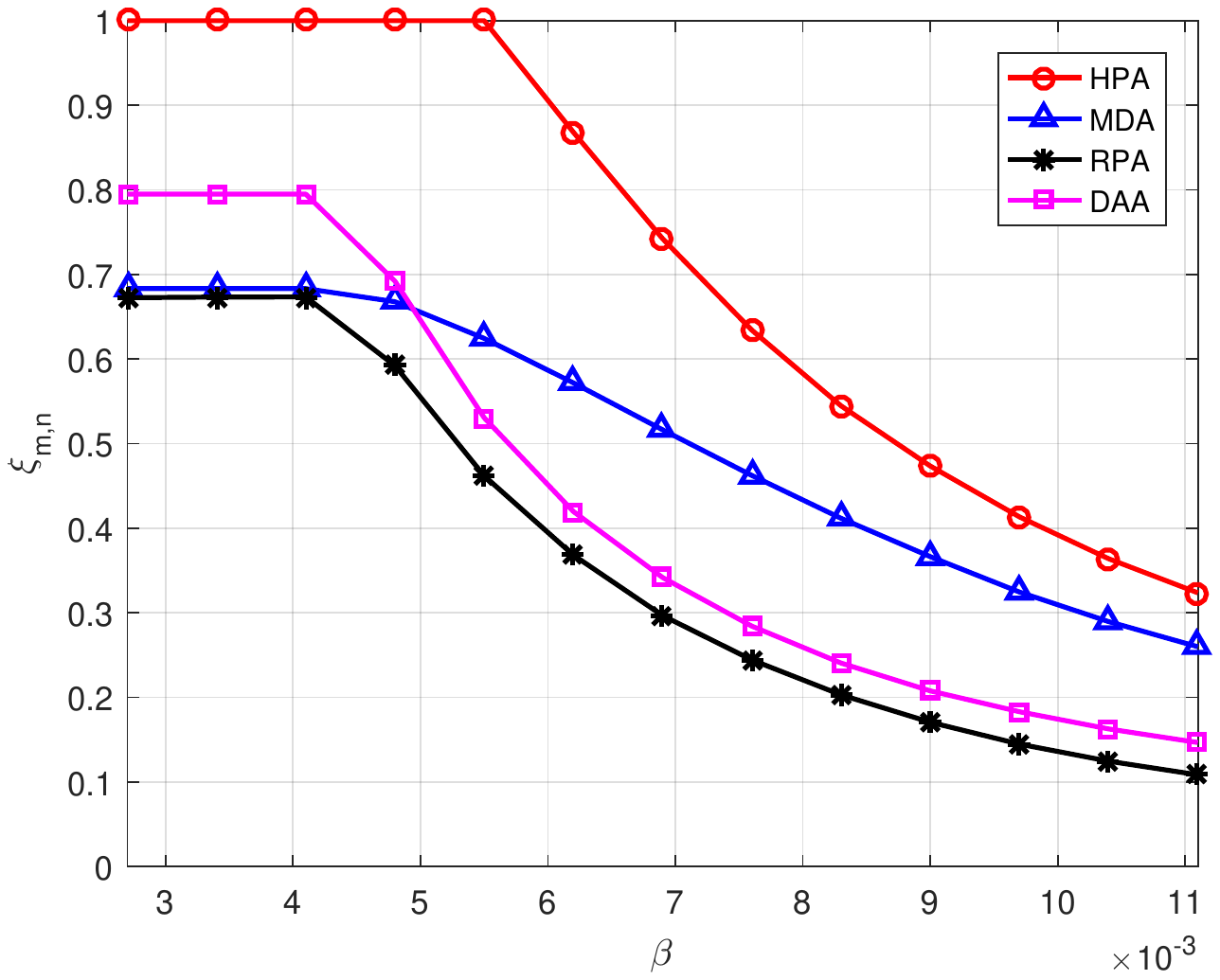}\label{fig:xibeta}}
			\subfloat[]{
			\includegraphics[scale=.55,trim={4.cm 8.5cm  4.5cm  9cm },clip]{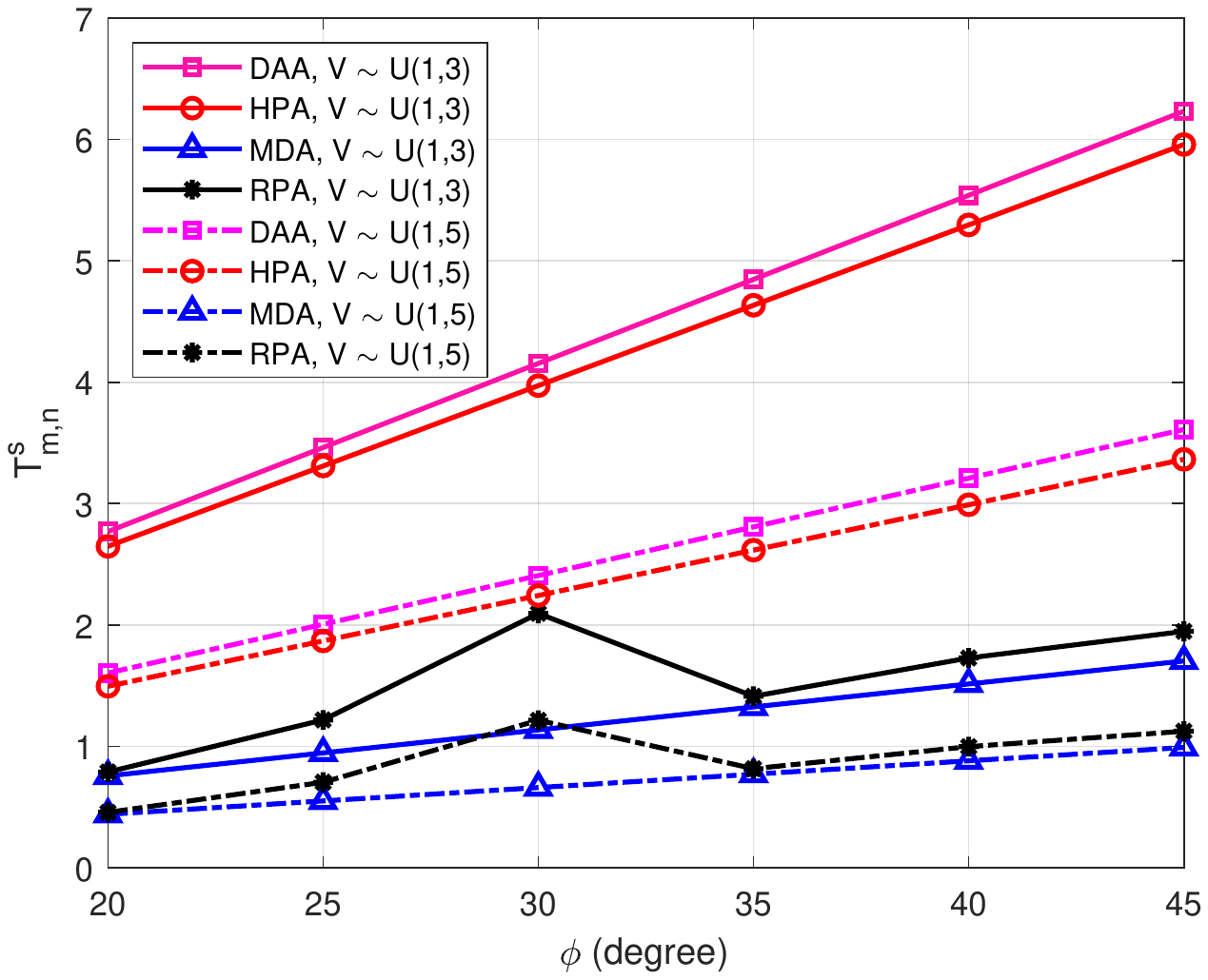}\label{fig:PAlink}}
	\end{tabular}		
	\caption{Performance comparison of the heuristic peer association algorithm with DAA, MDA, RPA: (a) data throughput as a function of data segment availability with antenna beamwidth $\phi=30^{\circ} $, (b) data throughput as a function of data packet size with $p_D = 1$, (c) data throughput as a function of blockage parameter $\beta$, and (d) link stability time of D2D links with $p_D=1$.}
\end{figure*}
\section{Numerical Results}\label{sec:NumericalResults}
In this section, we evaluate the performance of the proposed scheme through extensive simulations. Moreover, to emphasize the importance of designing the proposed scheme in enabling the CCN-based D2D network, its performance is compared with other methods proposed in the literature.
\subsection{Simulation Outline}
\hl{We built our simulator in MATLAB\textsuperscript{\textregistered} consisting of the D2D interaction environment in an area of the size \mbox{$10$ $km$ $\times$ $10$ $km$}, which is \mbox{-–given the transmit power of D2D users-–}  large enough to avoid the boundary effect.
In the simulation environment, D2D users are located uniformly and move according to the random walk model. D2D users' trajectories (speed and direction of movement) are drawn based on i.i.d. uniform random variables. D2D transceivers are equipped with a directional antenna for data transmission in the mmWave band. Also, we assume that all the DTs transmit at the same power. Simulation parameters shown in Table \mbox{\ref{tab:parameters}} are used, unless otherwise specified. }
To thwart the effect of noisy data, simulation results are obtained using the Monte Carlo simulation by averaging over $10,000$ network topologies.
We have outlined evaluation performance using three cases.

\begin{itemize}[noitemsep,topsep=0pt]
\item First we evaluate the performance of the proposed heuristic peer association (HPA) algorithm, in terms of the data throughput, link stability time of D2D links, and the amount of traffic offloaded from the cellular network. Moreover, the performance of the HPA algorithm is compared with minimum-distance peer association (MDA) \cite{xiao2016energy}, where DRs are paired with the closest DT, and random peer association (RPA)\cite{gu2015matching} through which DRs are associated randomly to a DT in its coverage area. \hl{ We also compare HPA's performance with the differed acceptance algorithm (DAA), which is a classic approach to solving the peer association problems that are modeled as a matching game\mbox{\cite{ namvar2015context,perfecto2017millimeter}}.}
 \item Second, the performance of the LLL-based beamwidth selection (LLL-BWS) algorithm in enabling established D2D links to optimize their antenna beamwidth is evaluated. The number of iterations it takes for the algorithm to converge to the optimal solution, and the sum of the network's data throughput are chosen as the performance measures. In addition, the performance of the LLL-BWS is compared with constant beamwidth selection (CBS) \cite{thornburg2016performance}, and random beamwidth selection (RBWS) \cite{ding2017random}. In CBWS, all users implement constant and identical antenna beamwidth, while in RBWS, each user randomly selects a beamwidth.
    \item \hl{Finally, to demonstrate the overall impact of our proposed initialization scheme in improving the network performance, we combine the HPA and LLL-BWS algorithms for peer association and beamwidth selection and compare their performance with other methods used in the literature. }
\end{itemize}

\subsection{Impact of Peer Association Algorithm}
\hl{\mbox{\textbf{Simulation setup I-}} A given DR, known as "test" DR located at the origin \mbox{$(0,0)$} is surrounded by DTs with a density of \mbox{$M = 10$ $km^{-2}$}. The DR's desired data packet is cached in DTs with a probability of \mbox{$p_D$}. The size of the data packet is 300 \mbox{$MB$}. The test DR is paired with a DT in its set of feasible DTs using HPA, DAA, MDA, and RPA algorithms. In case the selected DT in RDA and MDA does not contain the desired data packet, the transmission fails. Note that the probability of LOS link is relatively high due to the small value of blockage parameter $\beta = 0.0027$.}

Figure \ref{fig:PAprob} demonstrates the data throughput of the test DR, as a function of the probability of desired data packet availability with antenna beamwidth of 30$^{\circ}$. It is shown that in general, higher $p_D$ leads to higher data throughput, as the number of successful transmissions is higher. Also, it can be seen that HPA provides significantly higher data throughput compared to DAA, MDA and RPA, due to considering data segment availability, D2D users' velocity and directionality of the links simultaneously. \hl{Although DAA uses the same utility function as HPA, its convergence is time-consuming. The time complexity of DAA is \mbox{$\mathcal{O}(n^2)$} ($n$ is the number of users) to obtain the stable matching for a canonical matching game \mbox{\cite{perfecto2017millimeter}}. However, HPA's time complexity is constant and independent of the number of users, i.e., \mbox{$\mathcal{O}(f)$} ($f$ is the number of trials). Hence,  the time overhead complexity of DAA is significantly high compared to HPA. Consequently, DAA has a lower data rate than HPA, which leads to significantly lower data throughput as per \mbox{\eqref{eq:dataThrou}}. }

\hl{Figure \mbox{\ref{fig:xidelta}} shows the data throughput of the test DR as a function of data packet size with antenna beamwidth of $\phi=30^\circ$ and data packet availability \mbox{$p_D=1$}. It can be seen that as the size of the packet increases the data throughput decreases, however, HPA manages to maintain the data throughput very high and transmit the data packet completely when the data packet size is smaller than \mbox{$\delta_{m,n}<320$ $Mb$}. However, since DAA, RPA and MDA could not verify whether the link is LOS or stable for the duration of data transmission, thus failing to transmit the data packet when the size of the data is greater than 200 \mbox{$MB$}.}

\hl{Figure \mbox{\ref{fig:xibeta}} compares the data throughput of the test DR as a function of blockage parameter \mbox{$\beta$} with antenna beamwidth of \mbox{$\phi=30^\circ$} and data packet availability \mbox{$p_D=1$}. It can be seen that a higher blockage parameter deteriorates the link data throughput. This is in accordance with the fact that higher blockage parameter increases the probability of blocked D2D links and NLOS signal at mmWave band frequencies undergoes severe attenuation \mbox{\cite{bai2014analysis}}. However, since HPA verifies if the link is LoS before establishing the D2D links, it maintains higher data throughput compared to existing approaches.}

Figure \ref{fig:PAlink} shows the link stability time of a given D2D link as a function of antenna beamwidth with different D2D users' velocities. In order to solely analyze the impact of users' velocity on link stability, it is assumed in this scenario that the test DR's desired data packet is cached in all DTs, i.e., $p_D=1$. Figure \ref{fig:PAlink} shows that narrower antenna beams are more prone to misalignment, thus providing less stable D2D links. Also, it can be seen that as D2D users move faster, the D2D link stability time decreases, which is in accordance with \eqref{eq:linkS}. However, the proposed HPA algorithm provides D2D users with higher link stability time compared to MDA and RPA, thus guaranteeing higher data throughput. Although DAA provides a slightly better link stability time due to its semi-exhaustive search nature, DAA's high overhead in matching D2D users, results in significantly lower throughput.
\begin{figure}
\centering
\includegraphics[scale=.53, trim={4cm 8.3cm  4.5cm  8.7cm },clip]{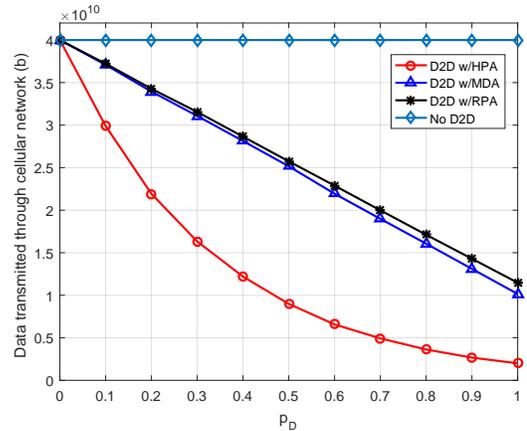}
\caption{Offload data from cellular network with 5 different data packet with size $1$ $Gb$.}
\label{fig:offload}
\end{figure}

\begin{figure}
    \centering
    \vspace{-.3cm}
    \includegraphics[scale=.53, trim={0cm 0cm  0cm  0cm },clip]{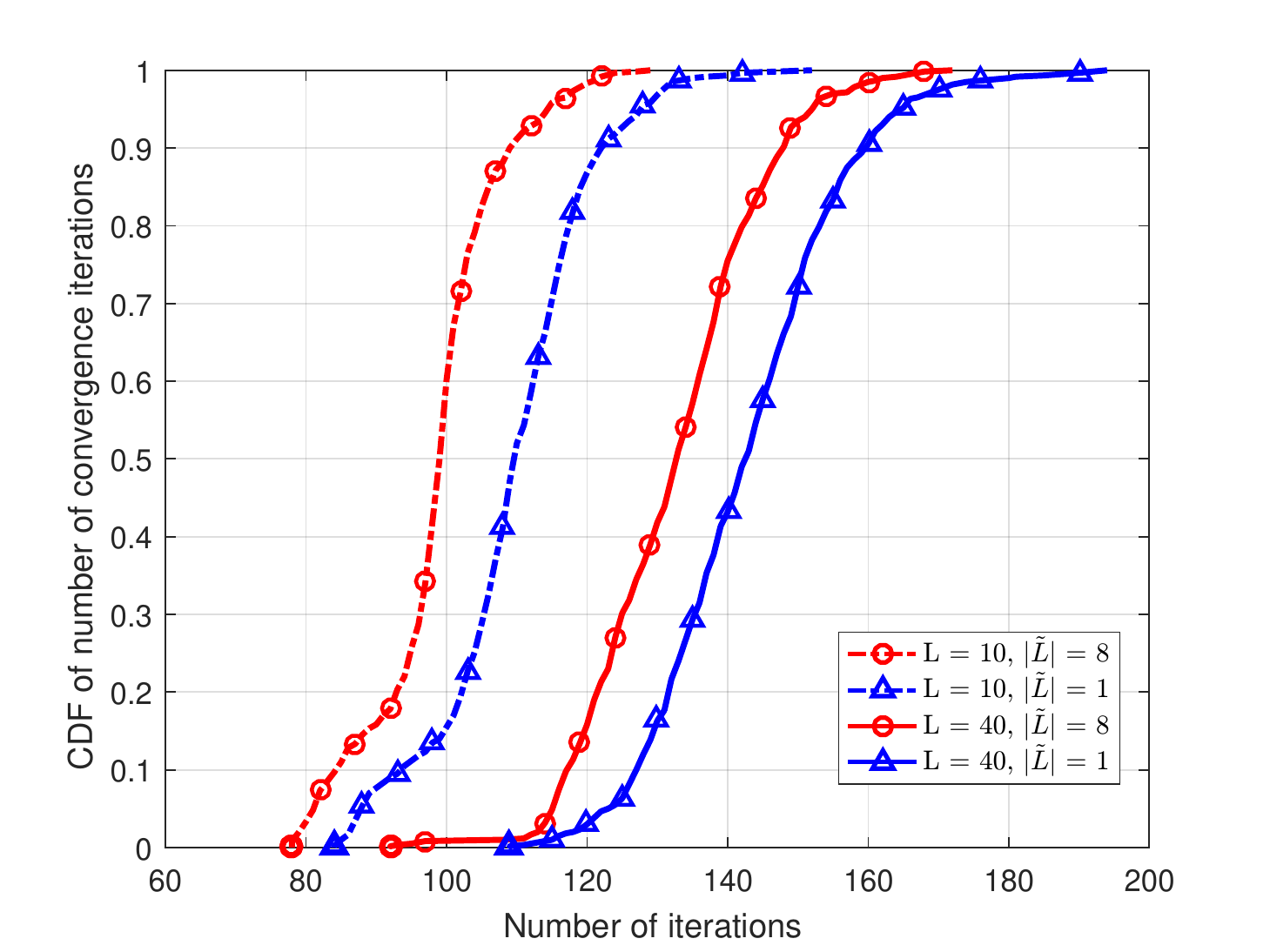}
    \caption{Number of iterations required for the synchronous LLL to converge, for  $L=10$ $km^{-2}$, $L=40$ $km^{-2}$, $|\tilde{\mathcal{L}}|=1$, and $|\tilde{\mathcal{L}}|=8$.}
    \label{fig:cdfOfIterations}
\end{figure}

\hl{\textbf{Simulation setup II-} D2D users are distributed uniformly in the network area with a density of \mbox{$M=N=40$ $km^{-2}$}. D2D users are assigned to be DT or DR with equal probability. DTs are cached randomly with 5 different popular data packets with a size of 1 \mbox{$Gb$}. DRs attempt 3 times to retrieve the desired data packet from a DT in their coverage area before switching to the cellular network, i.e., \mbox{$f=3$}.}

\hl{Figure \mbox{\ref{fig:offload}} shows the amount of data transmitted by the cellular network as a function of antenna beamwidth. It can be seen that utilizing the proposed HPA algorithm significantly offloads the cellular network compared to MDA and RPA algorithms. This is in accordance with the fact that HPA not only guarantees the stability of the D2D link by considering the user-specific context information, but also confirms the availability of the desired data packet prior to establishing a D2D link for data transmission. Therefore, HPA has a higher ability to offload the cellular network compared to its contemporary rivals.}

\subsection{Impact of Beamwidth Selection Algorithm}
\hl{\textbf{Simulation setup III-} In this simulation, $L$ D2D links are established between D2D transceivers that are located uniformly in the network environment. DTs are cached with various data packet sizes. The size of the data packets is distributed uniformly as $\delta_{l}\sim \mathcal{U}(0-300)$ $MB$. DTs are ready to transmit the data packets to their corresponding DRs. Also, corresponding D2D transceivers' wide-level beams are aligned.}

Figure \ref{fig:cdfOfIterations} shows the number of iterations required for the convergence of the synchronous LLL-BWS algorithm with two different D2D link densities, i.e., $L=10$ $km^{-2}$, and $L=40$ $km^{-2}$. It can be seen that a higher number of D2D links requires a higher number of iterations to find the optimal solution. However, this figure shows that increasing the number of simultaneously updating links from $|\mathcal{\tilde{L}}|=1$ to $|\mathcal{\tilde{L}}|=8$ increases the convergence speed of the algorithm significantly. This observation is in agreement with the fact that as more non-interacting D2D links have the opportunity to update their strategies simultaneously, the algorithm converges to the optimal NE faster.
\begin{figure*}
\vspace{-.3cm}
    \centering
    \begin{tabular}{@{}c@{}c@{}}
        \subfloat[]{
            \includegraphics[scale=.43, trim={3cm 7.9cm  3cm  7.4cm },clip]{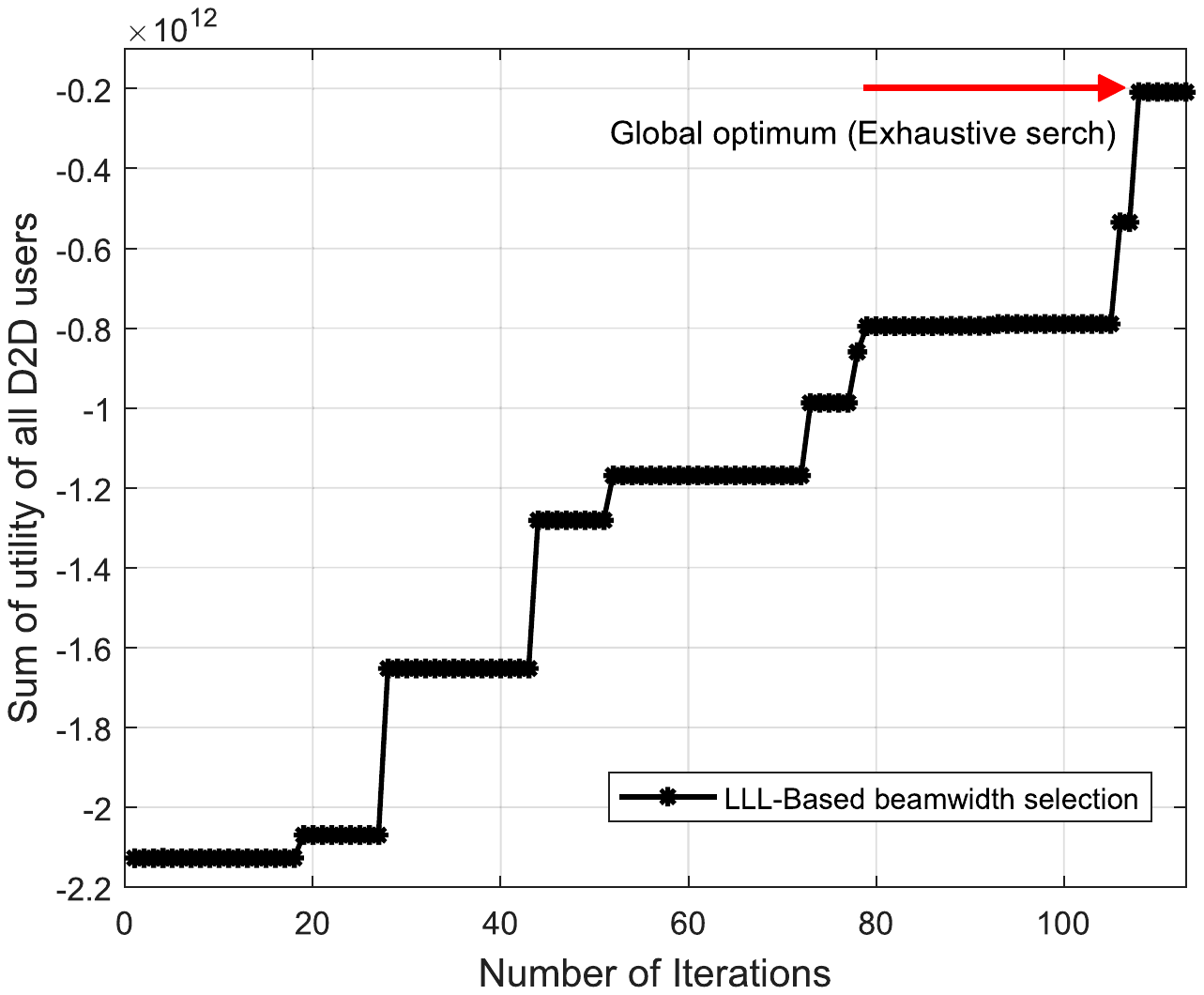}\label{fig:globalUtility}}\hfill
        \subfloat[]{
            \includegraphics[scale=.293, trim={1cm 5cm  1cm  6.35cm },clip]{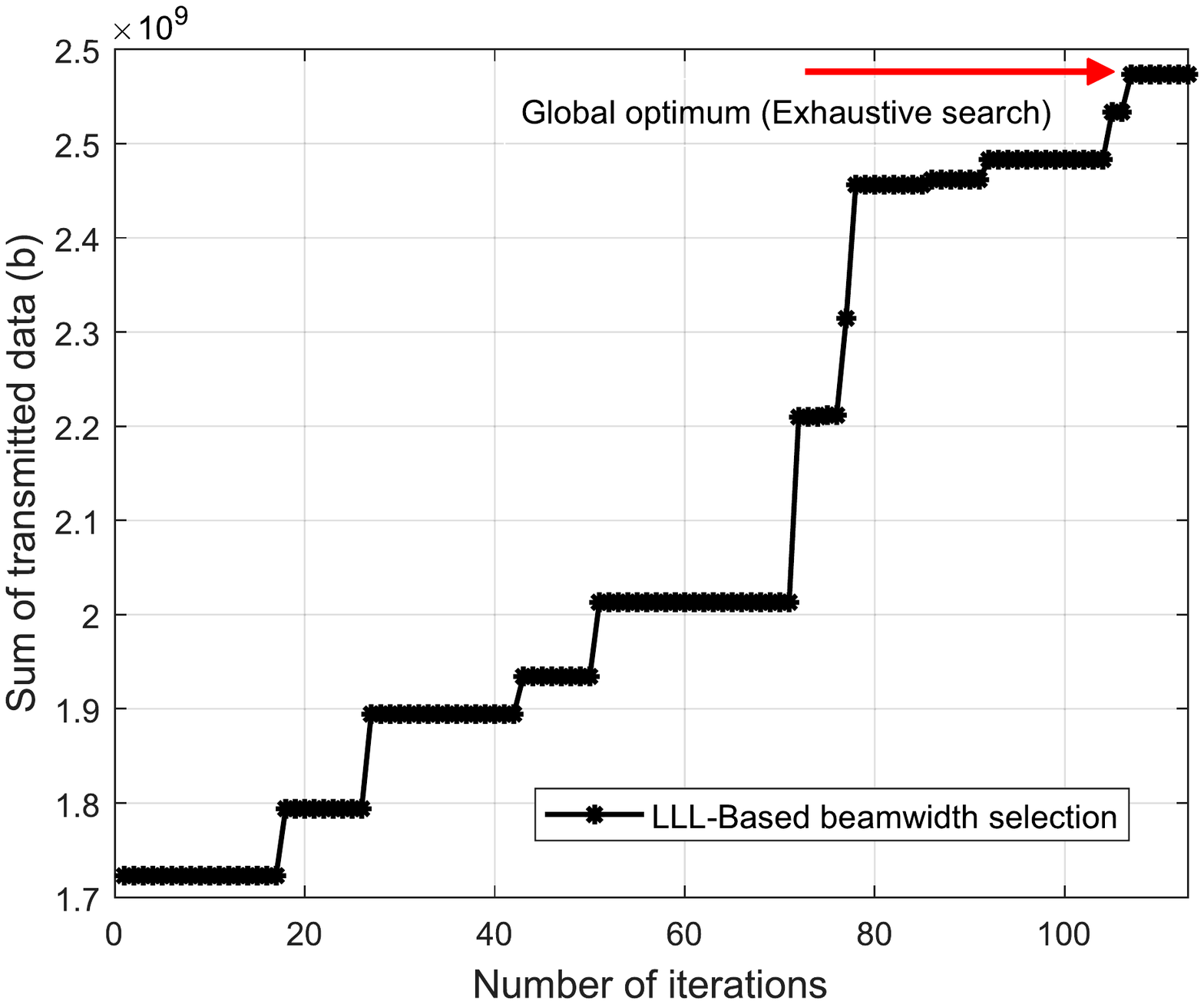}\label{fig:DeliveredData}}\hfill
            \subfloat[]{
            \includegraphics[scale=.47,trim={1cm 0cm  1cm  .7cm },clip]{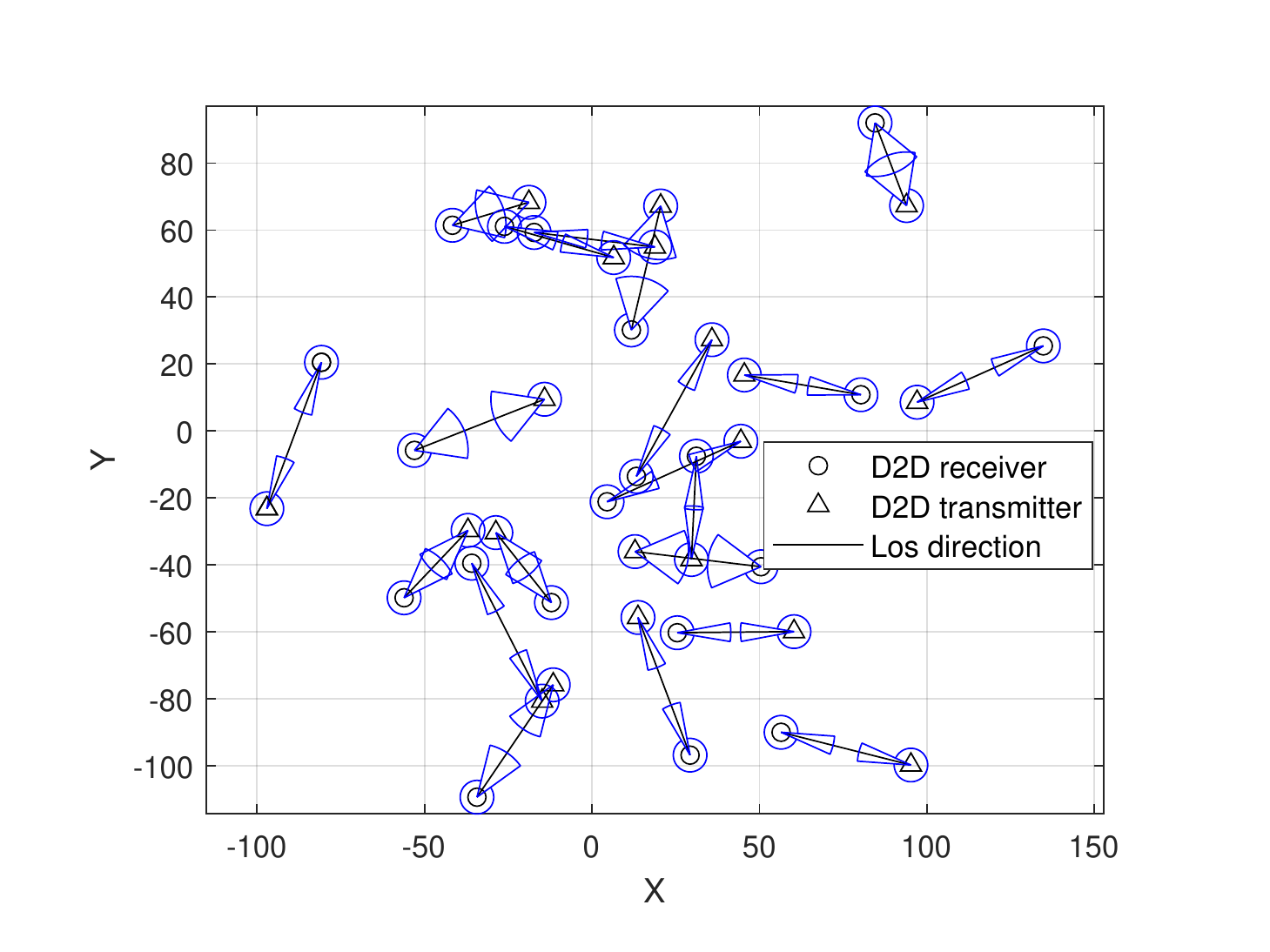}\label{fig:NetSchematic}}%
    \end{tabular}
    \caption{Convergence of the proposed mechanism to the global optimum for D2D link density of $L=20$ $km^{-2}$: (a) sum of network's global utility, (b) sum of network's data throughput, and (c) joint beamwidth selection strategy of D2D links, which is obtained using the proposed synchronous LLL-based beamwidth selection algorithm.}
\label{fig:CDFRateDelay}
\end{figure*}

Figure \ref{fig:globalUtility} and \ref{fig:DeliveredData} show the sum of global utility and sum of delivered data segments for a D2D network with D2D link density $L=20$ $km^{-2}$, respectively. These two figures verify that the utility of the bandwidth selection game $\mathcal{G}_b$ in \eqref{eq:game} is completely aligned with the network optimization problem in \eqref{eq:mainOptprb} with constraints \eqref{eq:antennaCon}-\eqref{eq:mainOptprb_h}. In other words, maximizing utility function in \eqref{eq:totalUt} for all D2D links $l \in \mathcal{L}$ leads to maximizing the network's sum data throughput. In addition, it can be seen that the proposed algorithm converges to the global optimum which is derived using the exhaustive search algorithm. The exhaustive search takes $\mathcal{O}(\prod_{l=1}^L |\Phi_l|)$ to find the optimal solution, while as it can be seen in \ref{fig:globalUtility}. the proposed algorithm converges considerably faster. For example, in this case, with $|\Phi_l|=4$, $4^{20} \sim 10^{12} $ iterations are required to obtain the optimal solution with exhaustive search, while the proposed algorithm found it in only $110$ iterations.
Figure \ref{fig:NetSchematic} depicts a snapshot of the network topology where the D2D links have selected their beamwidth strategy using the proposed LLL-BWS algorithm. It can be seen that D2D users have selected different beamwidths based on the size of the demanded data $\delta_l$, and their geographical position in the network. While without considering these two factors, all D2D users prefer to select the narrowest beamwidth to maximize their antenna gain and thus their achievable data rate.

Figure \ref{fig:comparison} plots the sum of the transmitted data segments for different D2D link density from $L=5$ $km^{-2}$ to $L=40$ $km^{-2}$. The performance of the proposed LLL-BWS algorithm is compared with CBWS and RBWS. For the case of CBWS, the narrowest and the widest beamwidth has been implemented, respectively. It can be seen that the proposed algorithm outperforms CBWS and RBWS algorithms. For example, in the case of $L=30$ $km^{-2}$ the proposed LLL-BWS algorithm improves the performance of the network by 3 times compared to the case that all D2D users adopt their narrowest beamwidth. In addition, it can be seen that when the network becomes congested (here for $L>30$ $km^{-2}$), the performance of the CBWS and RBWS methods deteriorates notably, while the proposed algorithm manages to keep the network performance at a desirable level. In particular, picking the widest beamwidth in the hot spots (congested areas) leads to a significant drop in the amount of transferred data. This is in accordance with the fact that the wider antenna beamwidth increases the interference on nearby users, and eventually decreases the achievable data rate of most of the D2D links.
\begin{figure}[!ht]
    \vspace{-.5cm}
    \centering
    \includegraphics[width = .8 \linewidth,  trim={4.3cm 7.9cm  4.6cm  8.6cm },clip]{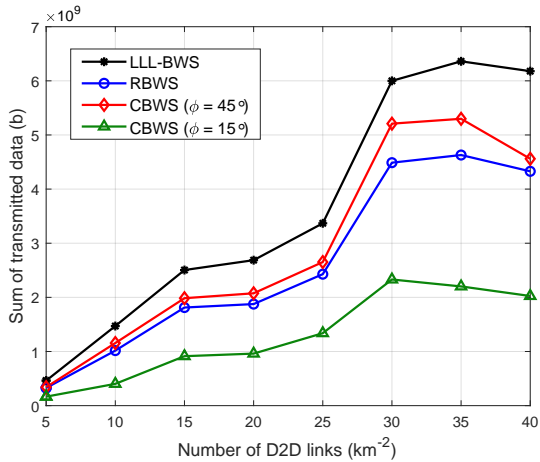}
    \caption{Performance comparison of the proposed beamwidth selection algorithm with constant and random beamwidth selection algorithm, for the network size of $L=5$ $km^{-2}$ to $L=40$ $km^{-2}$.}
    \label{fig:comparison}
\end{figure}

\subsection{Impact of the Proposed Initialization Scheme}
\hl{\textbf{Simulation setup IV-} D2D users are located uniformly in the network area with a density of \mbox{$M=N=40$ $km^{-2}$}. D2D users are assigned to be DT or DR with equal probability. DTs are cached randomly with 5 different popular data packets with a size of 300 \mbox{$MB$}. D2D users implement the proposed initialization scheme. First, D2D users are matched using HPA (Algorithm \mbox{\ref{alg:peerAssoc}}), MDA and RPA. Then, D2D users' antenna beamwidth are selected using the proposed LLL-BWS algorithm (\mbox{Algorithm \ref{alg:beamwidth}}) and CBWS with a beamwidth of $\phi=15^{\circ}$.}

\hl{Figure \mbox{\ref{fig:comparisonMethods}} compares the performance of the proposed initialization scheme against other methods used in the literature. The performance metric is the network's average data throughput which is defined as $\frac{1}{L}\sum_{l=1}^{L}\xi_l$. This figure shows the cumulative distribution function (CDF) of networks' average data throughput. It can be seen that the proposed scheme comprising the HPA and LLL-BWS outperforms the other methods noticeably, since HPA considers the users' trajectory and content availability in user assignment based on \mbox{\eqref{eq:utilityPeer}}. In addition, LLL-BWS enables users to optimize their beamwidth based on the data size and link stability time. The utility function in \mbox{\eqref{eq:totalUt}} is designed in a way that enables D2D users to maximize the network throughput instead maximizing individual's throughput.}

\begin{figure}[!ht]
    \centering
    \includegraphics[width = .9 \linewidth, trim={0cm 0cm  0cm 0cm },clip]{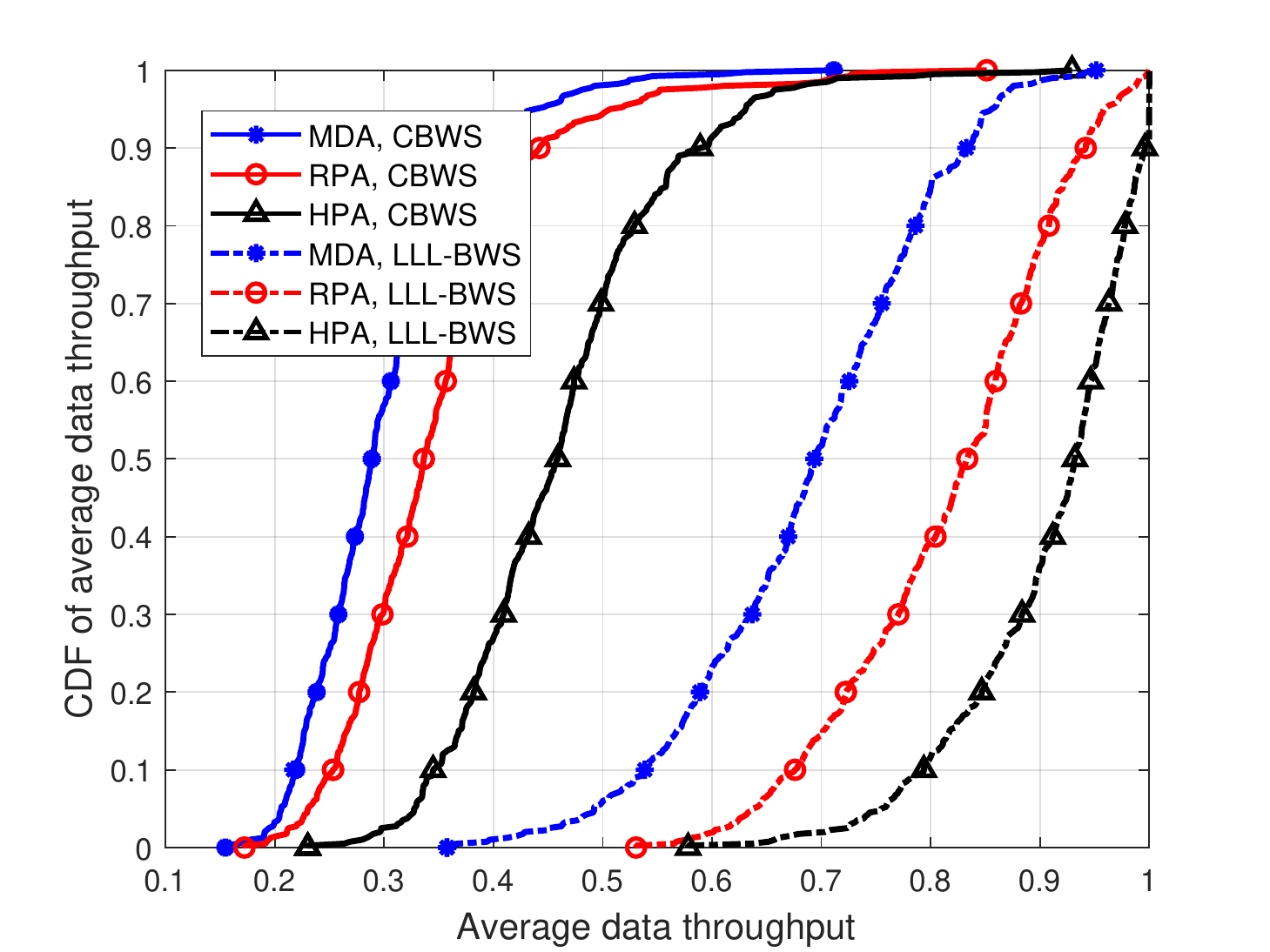}
    \caption{Performance comparison of the proposed initialization scheme: impact of combining heuristic peer association and LLL-based beamwidth selection.}
    \label{fig:comparisonMethods}
\end{figure}
\vspace{-5pt}
\section{Conclusions and Future Work}\label{sec:conclusion}
A novel decentralized scheme is proposed to enable D2D users to perform the initialization process in a CCN-based mmWave D2D network. The proposed scheme comprises of two phases, namely, heuristic peer association algorithm and synchronous beamwidth selection algorithm. The context-aware peer association algorithm is low-overhead with a low computational load and enables peer association in a decentralized manner. Following the peer association, antenna beamwidth optimization is performed considering the trade-off between antenna beamwidth and data throughput in directional communication.
A synchronous LLL-based algorithm is proposed to obtain the joint beamwidth selection strategy of all users to maximize the network data throughput.
the performance of the proposed scheme is evaluated through extensive Monte Carlo simulations. Simulation results show that the proposed initialization scheme significantly improves the network performance compared to other methods in the literature.

\hl{Future research includes assessing the performance of the initialization scheme in real scenarios, in particular indoor applications. In such scenarios, D2D links are more susceptible to misalignment and have shorter link stability time. The proposed beamwidth selection algorithm in this work will manage to compensate for the stability time by selecting the proper antenna beamwidth and render higher performance gains.}

\section*{Acknowledgments}
The authors would like to acknowledge the support from
Air Force Research Laboratory and OSD for sponsoring this
research under agreement number FA8750-15-2-0116.

\ifCLASSOPTIONcaptionsoff
  \newpage
\fi

\bibliographystyle{IEEEtran}
\bibliography{main.bib}

\begin{IEEEbiography}[{\includegraphics[width=1in,height=1.25in,clip,keepaspectratio]{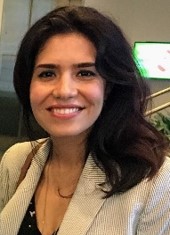}}]{Niloofar Bahadori}
received the B.Sc. degree in electrical and electronics engineering from Isfahan University, in 2011, the M.Sc. degree (with Hons.) in electrical and radio frequency (RF) engineering from Semnan University in 2013.  She is currently pursuing a Ph.D.
degree at North Carolina A\&T State University, Greensboro, NC, USA.  Her current research interests include device-to-device (D2D) and machine-to-machine (M2M) communication, mmWave band communication the Internet of Things (IoT), the applications of machine learning in improving wireless networks, and game theory. She is the recipient of
the 2019 IEEE Wireless Telecommunications Symposium (WTS) Best Paper
Award.
\end{IEEEbiography}

\begin{IEEEbiography} [{\includegraphics[width=1in,height=1.25in,clip,keepaspectratio]{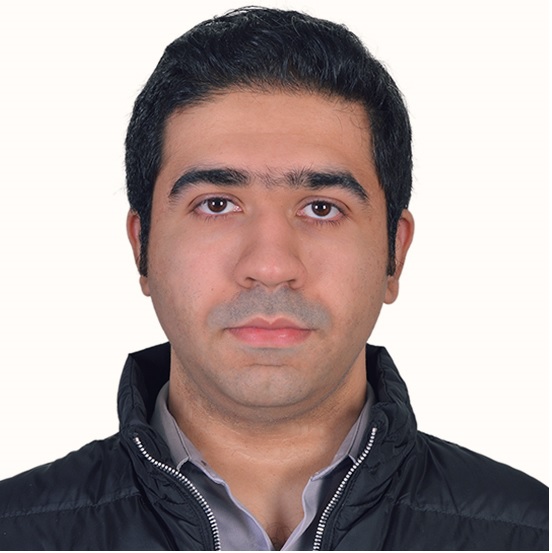}}]{Mahmoud Nabil}
 is an Assistant Professor in the Department of Electrical and Computer Engineering, North Carolina A and T University. He received his Ph.D. degree in Electrical and Computer Engineering from Tennessee Tech University, Cookeville, Tennessee in August 2019. He received his B.S. degree and the M.S. degree with honors in Computer Engineering from Cairo University, Egypt in 2012 and 2016, respectively. He published many journals and conferences in different prestigious venues such as IEEE internet of things journal, IEEE transactions of dependable and secure computing, IEEE Access, international conference on communication (ICC), international conference on pattern recognition (ICPR), and international conference on wireless communication (WCNC). His research interests include security and privacy in smart grid, machine learning applications, vehicular Ad Hoc networks, and blockchain applications.
\end{IEEEbiography}

\begin{IEEEbiography} [{\includegraphics[width=1in,height=1.25in,clip,keepaspectratio]{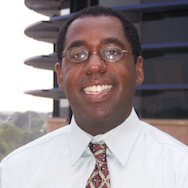}}]{Brian Kelley}
 received his BSEE from Cornell University's College of Electrical Engineering, Ithaca NY, and his MSEE and PhDEE from Georgia Tech in 1992. He spent 10 years with Motorola R\&D as principal architect of Wi-Fi, LTE, cellular platforms and as representative to the 3GPP-RAN Standards. Since 2007, he has been an Associate Professor of Electrical and Computer Engineering at the University of Texas at San Antonio (UTSA). He was Associate Editor of the IEEE System Journal from 2011-2013, Technical Program Committee Chair for IEEE GLOBECOM in 2015, Sabbatical Employee of DoD in Washington D.C. from 2015-2016 and Summer Faculty Fellow at ORNL in the Quantum Information System in 2015 and 2017. He has received over \$2.6M in research funding from ONR, has numerous IEEE publications and 11 US patents, and his current research focuses on 5G Communications and Physical Layer Security. He is also a member of the 5G steering group for JBSA’s Electromagnetic Defense Initiative, serves as UTSA’s representative to the National Spectrum Consortium and is a member of Tau Beta Pi and Eta Kappa Nu.
\end{IEEEbiography}

\begin{IEEEbiography} [{\includegraphics[width=1in,height=1.25in,clip,keepaspectratio]{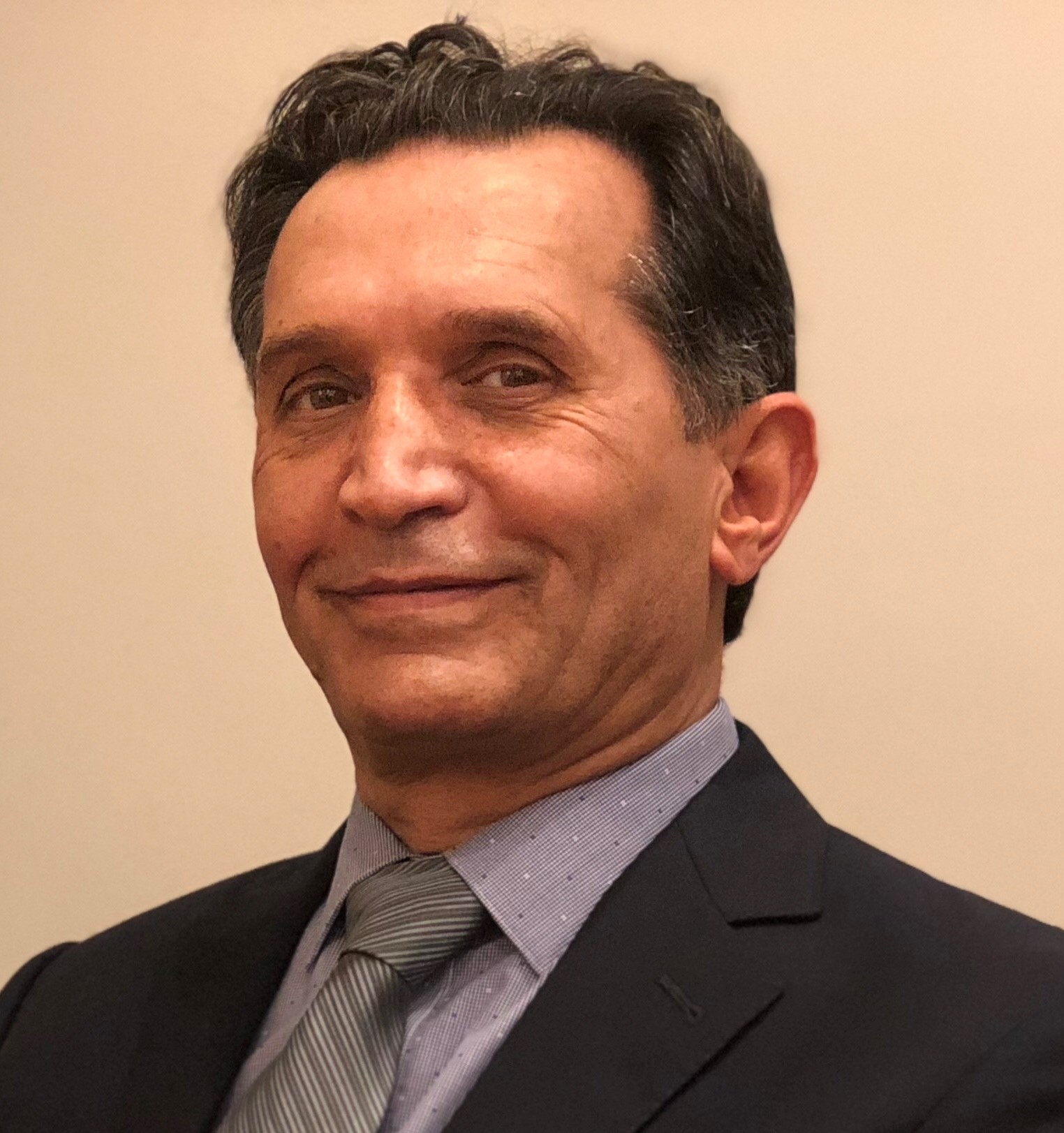}}]{Abdollah Homaifar}
 received the B.S.
and M.S. degrees from the State University of
New York at Stony Brook, in 1979 and 1980,
respectively, and the Ph.D. degree from the
University of Alabama, in 1987, all in electrical engineering. He is currently the NASA
Langley Distinguished Chair Professor and
the Duke Energy Eminent Professor with the
Department of Electrical and Computer Engineering, North Carolina A\&T State University (NCA\&TSU). He is also the Director of the Autonomous Control and Information Technology Institute and the Testing, Evaluation, and Control of Heterogeneous Large-Scale Systems of Autonomous Vehicles (TECHLAV), NCA\&TSU. His current research interests include machine learning, unmanned aerial vehicles (UAVs), testing and evaluation of autonomous vehicles, optimization, and signal processing. Through his research, he has obtained funding in excess of 30 million from various U.S. funding
agencies. He has written more than 350 technical publications including book chapters and journal and conference papers. He is a member of the IEEE Control Society, Sigma Xi, Tau Beta Pi, and Eta Kapa Nu. He also serves as an Associate Editor for the Journal of Intelligent Automation and
Soft Computing. He serves as a Reviewer for the IEEE Transactions on Fuzzy Systems, Man Machines and Cybernetics, and Neural Networks.
\end{IEEEbiography}

\end{document}